\definecolor{Blue}{RGB}{88, 105, 225}
\definecolor{QOrange}{RGB}{255, 153, 51}
\definecolor{Orange}{RGB}{255, 153, 50}
\DeclareMathOperator{\argmin}{arg\,min}
\DeclareMathOperator*{\diag}{\textit{diag}}
\DeclareMathOperator*{\trace}{\textit{tr}}
\newtheorem{theorem}{\bf Theorem}[section]
\newtheorem{lemma}{\bf Lemma}[section]
\newtheorem{remark}{\bf Remark}[section]
\newtheorem{assumption}{\bf Assumption}[section]
\newtheorem{information}{\bf Information}[section]
\newtheorem{problem}{\bf Problem}[section]
\newcommand{\nnum}{\nonumber}
\newcommand{\LL}{\mathcal{L}}
\newcommand{\RR}{\mathbb{R}}
\newcommand{\II}{\mathbb{I}}
\algnewcommand{\LineComment}[1]{\Statex \(\triangleright\) #1}
\newcommand\oprocendsymbol{\hbox{$\blacksquare$}}
\newcommand\oprocend{\relax\ifmmode\else\unskip\hfill\fi\oprocendsymbol}
\title{\LARGE \bf
Robust Vehicle Lane Keeping Control with Networked Proactive Adaptation}
\author{Hunmin Kim$^{\dagger}$, Wenbin Wan$^{\dagger}$, Naira Hovakimyan$^{\dagger}$, Lui Sha$^{\ddagger}$, and Petros Voulgaris$^{*}$
\thanks{This work has been supported by the National Science Foundation (CNS-1932529).}
\thanks{$^{\dagger}$Hunmin Kim, Wenbin Wan, and Naira Hovakimyan are with the Department of Mechanical Science and Engineering, University of Illinois at Urbana-Champaign, USA.
{\tt\small  \{hunmin, wenbinw2, nhovakim\}@illinois.edu}}%
\thanks{$^{\ddagger}$Lui Sha is with the Department of Computer Science, University of Illinois at Urbana-Champaign, USA.
{\tt\small  lrs@illinois.edu }}%
\thanks{$^{*}$Petros Voulgaris is with the Department of Mechanical Engineering, University of Nevada, Reno, USA.
{\tt\small  pvoulgaris@unr.edu }}
}
\begin{document}
\maketitle
\thispagestyle{empty}
\pagestyle{empty}


\begin{abstract}
Road condition is an important environmental factor for autonomous vehicle control.
A dramatic change in the road condition from the nominal status is a source of uncertainty that can lead to a system failure.
Once the vehicle encounters an uncertain environment, such as hitting an ice patch, it is too late to reduce the speed, and the vehicle can lose control.
To cope with future uncertainties in advance, we study a proactive robust adaptive control architecture for autonomous vehicles' lane-keeping control problems.
The data center generates a prior environmental uncertainty estimate by combining weather forecasts and measurements from anonymous vehicles through a spatio-temporal filter.
The prior estimate contributes to designing a robust heading controller and nominal longitudinal velocity for  proactive adaptation to each new abnormal condition.
Then the control parameters are updated based on posterior information fusion with on-board measurements.
\end{abstract}

\maketitle

\section{Introduction}

Self-driving cars have been one of the most active research areas~\cite{paden2016survey}. 
The autonomous vehicles are a safety-critical system operating in dynamic environments. Their controller should cope with environmental changes robustly.
The current paper is motivated by the safe control problem when hitting an ice patch. Low speed reduces the risk of skidding~\cite{roe1998high}, but it is too late to reduce the speed once hitting an ice patch. The vehicle should slow down in advance and update the controller for the new operating condition to reduce the risk of failure.

The model predictive control (MPC) approach has recently proven its effectiveness for robust and optimal control for vehicle longitudinal and lateral dynamics,~\cite{falcone2007predictive,funke2016collision,williams2018best,carvalho2014stochastic}.
This control strategy also takes such a risk of skidding under dramatic changes of road conditions, because it is still reactive in the sense that it starts to adapt to or learn an uncertain environment after encountering it.
Communication network-enabled controllers can address a part of the problem by incorporating environmental information shared from preceding vehicles,~\cite{na2020disturbance,bertoni2017adaptive}. Motivated by those papers, the current paper leverages vehicle-to-cloud (V2C) communication for a proactive robust adaptive control architecture for lateral dynamics, systematically combining environmental measurements from anonymous vehicles and weather forecasts.


\subsection{Our Contributions}

\begin{figure}[h]
\centering
 \includegraphics[width=\linewidth]{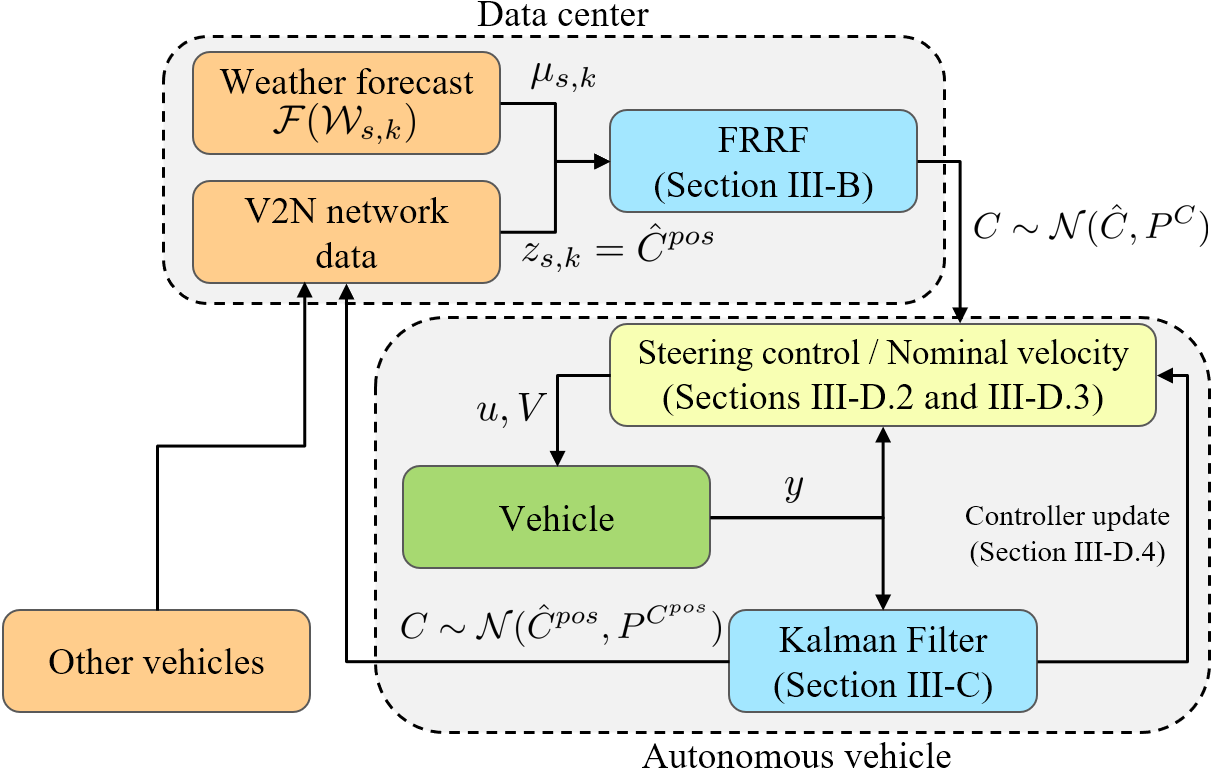}
\caption{(Overall architecture) The data center provides a prior estimate. Controller and velocity are proactively designed based on it for each area of the road.} 
\label{fig:architecture}
\end{figure}

The current paper proposes a novel proactive robust adaptive control architecture for autonomous vehicles to operate in various environmental conditions. Figure~\ref{fig:architecture} illustrates our overall system architecture. The prior of cornering stiffness for multiple areas is estimated by a newly developed fixed rank resilient filter (FRRF) fusing information from the weather forecast and vehicle network data. The $\LL_1$ adaptive heading controller and nominal longitudinal velocity are designed proactively for each area, based on the prior distribution of cornering stiffness.
The proactive adaptation will reduce long-term and large scale uncertainty, while the $\LL_1$ adaptive controller deals with fine-scale uncertainty. Then, based on the posterior distribution of cornering stiffness obtained from the on-board measurements, control parameters are updated.
We also analyze the practical exponential stability of FRRF, when measurements of each area are obtained as a Poisson arrival process.

\subsection{Related Work}


MPC is the most widely used controller for vehicle longitudinal and lateral dynamics.
Reference~\cite{falcone2007predictive} studies a short term predictive active front steering using the MPC for double lane change problem on snow road for nonlinear dynamics. An MPC-enabled tracking controller with varying time-steps has been developed in~\cite{funke2016collision}, where short term fine steps allow stability, and long term coarse steps are dedicated to collision avoidance. In~\cite{williams2018best}, the best-response iteration method is combined with information-theoretic MPC for agile vehicle maneuvers operating in close proximity. Reference~\cite{carvalho2014stochastic} proposes a stochastic MPC integrating Bayesian estimation and introduces a parameterized risk factor that can balance a trade-off between risk and performance.
In~\cite{kamal2012model}, the vehicle measures traffic information to compute the fuel-optimal vehicle control input using MPC, predicting the preceding vehicle's future states.

As an extension of the works mentioned earlier, some approaches deal with uncertain and changing environments adopting a learning algorithm. Reference~\cite{hewing2019cautious} develops a cautious MPC that improves the nominal model via Gaussian process regression and propagates Gaussian uncertainty in time to enhance the performance safely. References~\cite{rosolia2017learning,rosolia2017robust} develop learning MPC, improving performance by learning model from the data, while guaranteeing safety by constructing a safety set based on the previous iteration.

The majority of existing communication network-enabled controllers rely on short-range vehicle-to-vehicle communication between limited connected vehicles for longitudinal control (platooning). 
In particular,~\cite{na2020disturbance} designs a disturbance observer-based controller for platooning. The observer estimates the road slope and shares this information with the following platoons to help them  reduce the fuel consumption. Reference~\cite{bertoni2017adaptive} studies the optimal trade-off between air drag reduction and powertrain energy losses, exploiting a preview from the preceding vehicle. Other network-enabled applications include collision avoidance between vehicles~\cite{miller2002adaptive,biswas2006vehicle,hafner2013cooperative}.

Our control strategy relies on the $\LL_1$ adaptive control,~\cite{cao2008design,hovakimyan2010L1}, which can promptly compensate for unmodeled uncertainties within the filter bandwidth while guaranteeing transient and steady-state performance. 
Due to such merit, the $\LL_1$ adaptive controller has been used as an ancillary controller to ensure that the real system performs as the nominal system. The $\LL_1$ adaptive controller has been integrated with model predictive path integral control~\cite{pravitra2020l1} and with contraction control~\cite{lakshmanan2020safe,gahlawat2020mathcal}.
The $\LL_1$ adaptive controller has been applied to the vehicle lateral dynamics that demonstrates  successful compensation for unmodeled uncertainty, such as parametric uncertainty, wind gust, and disturbances of various natures~\cite{shirazi2017mathcal}.

The prior estimation of the cornering stiffness is a spatio-temporal data fusion problem because the road information contains its attribute as well as spatial and temporal information.
Spatio-temporal modeling and filtering have been widely used in environmental process estimation~\cite{evensen2009data}.
These methodologies' main idea is to model spatial and temporal random effect in dynamic systems and recursively estimate the target variable.
Since these methodologies consider both spatial and temporal correlation for a large scale system, they provide a smooth geostatistical mapping.
Recent research focuses on reducing the computational complexity for potentially massive datasets~\cite{hunt2007efficient,cressie2010fixed,nguyen2014spatio}.
In particular, the spatio-temporal fixed rank filter in~\cite{cressie2010fixed} improves the computational efficiency using spatio-temporal models defined on a fixed dimensional space.
The current paper extends the spatio-temporal fixed rank filter to capture model uncertainty and (unmodeled) biased noises.

\section{Vehicle lateral dynamics and problem statement}

The bicycle model is a simplified vehicle model that has been widely used and has been proven as a good approximation~\cite{carvalho2014stochastic,falcone2007predictive,rajamani2011vehicle}.
Consider Figure~\ref{fig:bicycle}, in which the
variables $p^y$, $p^\psi$, $V$, and $\delta$ denote the lateral position, yaw angle, (longitudinal) velocity, and front steering angle, respectively. Parameters $C_f$, $C_r$, $m$, $I_z$, $\ell_f$, and $\ell_r$ are the
front/rear cornering stiffness, mass, yaw moment of inertia, and distance of front/rear tire from the center of gravity, respectively.

\begin{figure}[thpb]
\centering
 \includegraphics[width=0.42\textwidth]{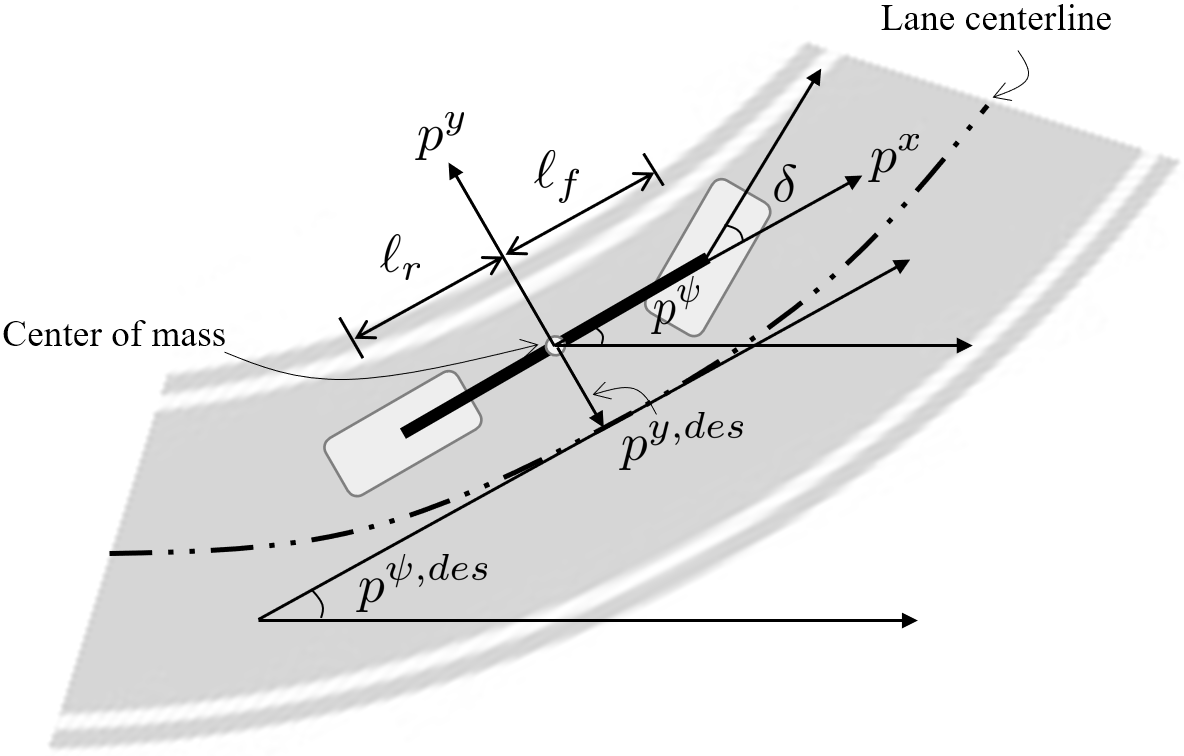}
\caption{Vehicle lateral dynamics.}
\label{fig:bicycle}
\end{figure}

Given a constant velocity $V$, the dynamics of the collective state $p = [p^y, \dot{p}^y, p^\psi, \dot{p}^\psi]^\top$ with heading input $u = \delta$ are described by ((2.31) in~\cite{rajamani2011vehicle})
\begin{align}
    \dot{p} = A^o p + b^o u,
    \label{eq:dynamic_ori}
\end{align}
where the system matrices are
\begin{align*}
    &A^o=\left[
    \begin{array}{cccc}
        0& 1& 0& 0\\
        0& -2\frac{C_f+C_r}{m V}& 0& -V - 2\frac{C_f\ell_f-C_r\ell_r}{m V}\\
        0& 0& 0& 1\\
        0& -2\frac{C_f\ell_f-C_r\ell_r}{I_z V}&0& -2\frac{C_f\ell_f^2+C_r\ell_r^2}{I_z V}
    \end{array}
    \right]\nnum\\
    &b^o =\left[
    \begin{array}{cccc}
        0&
        \frac{2C_f}{m}&
        0&
        \frac{2 C_f\ell_f}{I_z}\\
    \end{array}
    \right]^\top.
\end{align*}
Given the desired lateral position $p^{y,des}$ (center of the lane) and the desired yaw angle $p^{\psi,des}$, the bicycle model~\eqref{eq:dynamic_ori} can be reformulated as error dynamics ((2.45) in~\cite{rajamani2011vehicle}):
\begin{align}
    \dot{x} = A(V,C_f,C_r) x + b(C_f) u + g(V,C_f,C_r) \dot{p}^{\psi,des}, 
    \label{eq:dynamic}
\end{align}
where $x = [x_1,\dot{x}_1, x_2,\dot{x}_2]^\top$, $x_1 \triangleq p^y - p^{y,des}$ and $x_2 \triangleq p^\psi - p^{\psi,des}$ are the error states.
The rate of the desired yaw angle is found by $\dot{p}^{\psi,des} = \frac{V}{R}$, where $R$ is the radius of the road.
The system matrices are
\begin{align*}
    &A(V,C_f,C_r)=\nnum\\
    &\left[
    \begin{array}{cccc}
        0& 1& 0& 0\\
        0& -2\frac{C_f+C_r}{m V}& 2\frac{C_f+C_r}{m}& 2\frac{-C_f\ell_f+C_r\ell_r}{m V}\\
        0& 0& 0& 1\\
        0& -2\frac{C_f\ell_f-C_r\ell_r}{I_z V}& 2\frac{C_f\ell_f-C_r\ell_r}{I_z}& -2\frac{C_f\ell_f^2+C_r\ell_r^2}{I_z V}
    \end{array}
    \right]
    \end{align*}\begin{align*}
    & b(C_f) = b^o, \quad
    g(V,C_f,C_r) =\left[
    \begin{array}{c}
        0\\
        -2\frac{C_f\ell_f - C_r\ell_r}{m V} - V\\
        0\\
        -2\frac{C_f\ell_f^2 + C_r\ell_r^2}{I_z V}\\
    \end{array}
    \right].
\end{align*}
It is worth emphasizing that matrices $A$ and $g$ depend on velocity $V$, and that the matrices $A$, $b$, and $g$ depend on cornering stiffnesses $C_f$ and $C_r$. For notational simplicity, we  express them as $A(V)$, $b$, and $g(V)$, when their dependency on cornering stiffnesses does not need to be emphasized.

The cornering stiffness $C_f$ (and $C_r$) is the coefficient related to the lateral force $F_{f}$ and sliding angle $\beta$, where it has a linear relation $F_f = C_f \beta$ for small $\beta$. This parameter is closely related to the road friction. In this paper, the cornering stiffnesses $C_f$ and $C_r$ are assumed to be unknown, and we can estimate them using the following information.
\begin{information}
i) weather forecast, ii) vehicle network data (anonymous vehicles' cornering stiffness estimates), iii)
on board measurement (GPS, IMU).
\label{infor1}
\end{information}

We formulate the problem of interest as follows.

\begin{problem}
Given Information~\ref{infor1}, the problem is to develop a robust control architecture that stabilizes the error dynamics~\eqref{eq:dynamic} of a vehicle operating in multiple areas through controlling the heading $u=\delta$ and designing the longitudinal velocity $V$.
\label{sec:prob}
\end{problem}

\section{Proactive robust adaptive control}\label{sec:pro}

\subsection{Overall architecture}\label{sec:pro}

The current section provides an overview of the proposed proactive control architecture (Figure~\ref{fig:architecture}) to address Problem~\ref{sec:prob}.
The data center acquires the first two pieces in Information~\ref{infor1} about $C_f$ and $C_r$, which are fused to estimate the prior of cornering stiffness for each area by FRRF algorithm in Section~\ref{sec:prior}.
This will provide a heatmap of cornering stiffness for multiple areas.
Given the cornering stiffness's prior distribution, we pre-design a robust controller and a constant longitudinal velocity for each area of interest.
The design procedure for the $\LL_1$ adaptive control and velocity design for a single area is outlined in Sections~\ref{sec:bound} and~\ref{sec:L1vehicle}, and this procedure can be repeated for other areas.
The on-board measurement and prior estimate are fused to produce a posterior cornering stiffness estimate of the current area in Section~\ref{sec:posterior}. The posterior information is then used to update the control parameters presented in Section~\ref{sec:parameter}.

\begin{remark}
We should design the velocity before encountering an uncertain environment. 
Since the system matrices depend on the velocity, the velocity and controller are simultaneously designed based on the prior distribution of cornering stiffness rather than the posterior distribution.
\end{remark}

\subsection{Prior estimation: Spatio-temporal fixed rank resilient filtering}\label{sec:prior}

The information about cornering stiffness from weather forecasts and anonymous vehicles contains its attribute as well as spatial and temporal information. Therefore, the prior estimation of the cornering stiffness can be formulated as a spatio-temporal data fusion problem.
We propose to extend the spatio-temporal fixed rank filter~\cite{cressie2010fixed} to a fixed rank resilient filter (FRRF) such that the filter captures model uncertainty and (unmodeled) biased noises.
Assume the cornering stiffness $C_f$ (or $C_r$) follows a spatio-temporal process $\{q_{s,k}: s \in D, k \in \{1,2,\cdots, n_D\}\}$, where $q_{s,k} \in \RR$, and $D$ is the index set of spatial domains (or area), and $k$ is the discrete-time index.
Domain $D$ could be finite, or countably infinite.
Now consider the spatio-temporal mixed effect model~\cite{cressie2010fixed,nguyen2014spatio}:
\begin{align}
    q_{s,k} &= \mu_{s,k} + S_{s,k}\eta_k + \xi_{s,k}\nnum\\
    z_{k} & = [z_{s_{1_k},k},z_{s_{2_k},k},\cdots,z_{s_{{n_k}},k}]^\top\nnum\\
    z_{s,k} &= q_{s,k} + \epsilon_{s,k},
    \label{eq:spatio}
\end{align}
where $z_{s,k} \in \RR$ is an output of area $s$ at time $k$ and is subject to measurement noise $\epsilon_{s,k}$.
At time $k$, we observe $n_k$ sensor outputs, and
the collection of outputs is denoted by $z_{k} \in \RR^{n_k}$. The collection of measured area indices is denoted by $O_k = \{s_{1_k},s_{2_k},\cdots,s_{n_k}\} \subseteq D$.
In the current paper, anonymous vehicles estimate the cornering stiffness of the presence area and send the estimates to the center through V2C communication. This vehicle network data represents the output $z_{k}$.

Consider the first equation in~\eqref{eq:spatio}. The first term $\mu_{s,k} \in \RR$ is a known time-varying value that models large scale variation.
For the cornering stiffness estimation problem, we assume $\mu_{s,k}$ is a function of weather forecast ${\mathcal{W}}_{s,k}$ (including temperature, precipitation, humidity, wind, and more), i.e., $\mu_{s,k}={\mathcal{F}}({\mathcal{W}_{s,k}})$. The mapping function ${\mathcal{F}}(\cdot)$ can be found by standard learning/regression algorithms (e.g., Gaussian process regression, neural network, basis function regression) by using historical input-output data $\langle \mathcal{W}_{s,k}, C_{f,k}\rangle$.
In the current paper, we assume the function ${\mathcal{F}}$ is given.

The second term $S_{s,k}\eta_k$ captures a smooth small scale variation that correlates the spatial relation between different areas by the finite $n_\eta$-dimensional spatial basis $S_{s,k}$.
Matrix $S_{s,k}$ is known, but the state variable $\eta_k \in \RR^{n_\eta}$ is unknown. The third term $\xi_{s,k} \in \RR$ presents time-dependent fine-scale variation that captures the nugget effect. The state variable $\eta_k$ is supposed to evolve according to the following dynamic equation:
\begin{align}
    \eta_{k+1} = H_k \eta_k + G_k d_k + \zeta_k,
    \label{eq:eta}
\end{align}
where $H_k$ and $G_k$ are known matrices.
The first term $H_k \eta_k$ captures temporal correlation, and the row of $H_k$ can be chosen to be zeros, if the corresponding component $\eta_{k+1}$ does not change dynamically. The second term $G_k d_k$ denotes a biased noise and model uncertainty, where $d_k \in \RR^{n_d}$ is unknown, and it can be seen as an unknown input.
This term is absent in~\cite{cressie2010fixed,nguyen2014spatio}. The last term $\zeta_k \in \RR^{n_\eta}$ represents a fine-scale variation of hidden state $\eta_k$.
All noises $\epsilon_{s,k}$, $\xi_{s,k}$, $\zeta_k$
are independent zero-mean Gaussian with covariance $P_{s,k}^\epsilon$, $P_{s,k}^\xi$, and $P_{k}^\zeta$, respectively.

This paper extends the fixed rank filtering to FRRF, incorporating biased noise and model uncertainty $d_k$, described in~\eqref{eq:eta}. Our interest is to recursively estimate the hidden state $q_{s_*,k}$ for the query area $s_* \in D$. 

Denote $\mu_k$, $S_k$, $\epsilon_k$, $\xi_k$  the collection of the corresponding values for all $s \in O_k$ and define $P_k^\epsilon = \diag (P_{s,k}^\epsilon)$ and $P_k^\xi = \diag (P_{s,k}^\xi)$ for all $s \in O_k$ for simplicity.
The matrix $E_k \in \{0,1\}^{n_k \times n_D}$ denotes the output matrix having $1$ for $(1,s_{1_k}), \cdots, (n_k,s_{n_k})$ elements, and $0$ for the others.
Let $\hat{v}_k$, $\tilde{v}_k \triangleq v_k-\hat{v}_k$, and $P_k^{v} \triangleq {\mathbb{E}}[(v_k-\hat{v}_k)(v_k-\hat{v}_k)^\top]$ denote the estimate, estimation error, and estimation covariance of a variable $v$ at time $k$.

The estimate $\hat{q}_{s_*,k}$ represents an estimate of cornering stiffness $C_f$ (or $C_r$). 
Appendices~\ref{app:FRRFder}, and~\ref{app:FRRFpro} present detailed derivation and properties of FRRF.
The derivation of the algorithm is motivated by fixed rank filtering~\cite{cressie2010fixed} and simultaneous unknown input and state estimation algorithms~\cite{gillijns2007unbiased,wan2019attack}, and, thus, they also share similar properties. 
In particular, the proposed algorithm is the best linear unbiased estimation (Lemma~\ref{lemma:BLUE}), and the estimation error is practically exponentially stable when measurements for each area are obtained as a Poisson process (Theorem~\ref{the:FRRFstability}).

The summary of the proposed algorithm is shown below. Given the output $z_{s,k}$ and the previous estimate $\hat{\eta}_{k-1}$, the unknown variable $\eta_k$ in~\eqref{eq:eta} is estimated by rejecting the unmodeled uncertainty $d_k$.
The variable $q_{s,k}$ in~\eqref{eq:spatio} is estimated from $\hat{\eta}_{k}$ compensating for fine-scale variation $\xi_{s,k}$ by its estimate $\hat{\xi}_{s,k}$.\\
\textbf{Recursive prediction:}
\begin{align}
    &\hat{\eta}_{k|k-1} = H_{k-1}\hat{\eta}_{k-1}+G_{k-1}M_k(z_k -\mu_{k} - S_{k}H_{k-1}\hat{\eta}_{k-1})\nnum\\
    &P_{k|k-1}^\eta = (\II-G_{k-1} M_kS_k)H_{k-1}P_{k-1}^\eta H_{k-1}^\top(\II\nnum\\
    &-G_{k-1} M_kS_k)^\top +G_{k-1} M_k(P_k^\epsilon+E_kP_k^\xi E_k^\top)M_k^\top G_{k-1}^\top \nnum\\
    &+ (\II-G_{k-1} M_kS_k)P_{k-1}^\zeta(\II-G_{k-1} M_kS_k)^\top,
    \label{eq:pred}
\end{align}
where
\begin{align}
    M_k = ( G_{k-1}^\top S_k^\top R_k^{-1}S_kG_{k-1} )^{\dagger} G_{k-1}^\top S_k^\top R_k^{-1},
    \label{eq:M}
\end{align}
and $R_k=S_k(H_{k-1}P_{k-1}^\eta H_{k-1}^\top+P_{k-1}^\zeta) S_k^\top+P_k^\epsilon + E_kP_k^{\xi}E_k^\top$.\\
\textbf{Recursive estimation:}
\begin{align}
    \hat{\eta}_k &= \hat{\eta}_{k|k-1} + K_k (z_k - \mu_{k}-S_{k}\hat{\eta}_{k|k-1})\nnum\\
    P_k^\eta &= (\II-K_kS_k)P_{k|k-1}^\eta(\II-K_kS_k)^\top \nnum\\
    &+ K_k (P_k^\epsilon+E_kP_k^\xi E_k^\top)K_k^\top \nnum\\
    &+ (\II-K_kS_k)M_k(P_k^\epsilon+E_kP_k^\xi E_k^\top)K_k^\top \nnum\\
    &+ K_k(P_k^\epsilon+E_kP_k^\xi E_k^\top)M_k^\top (\II-K_kS_k)^\top,
    \label{eq:estim}
\end{align}
where
\begin{align}
   K_k=(P^{\eta}_{k|k-1} S_{k}^\top - M_{k}  (P_k^\epsilon+E_kP_k^\xi E_k^\top))\tilde{R}_k^{-1} ,
   \label{eq:K}
\end{align}
and
$\tilde{R}_k=S_{k} P^{\eta}_{k|k-1} S_{k}^\top + (P_k^\epsilon+E_kP_k^\xi E_k^\top)-S_{k} M_{k}(P_k^\epsilon+E_kP_k^\xi E_k^\top)-(P_k^\epsilon+E_kP_k^\xi E_k^\top) M_{k}^\top S_{k}^\top$.\\
\textbf{Estimation of $q_{s_*,k}$:} 
\begin{align*}
    &\hat{q}_{s_*,k} = \mu_{s_*,k} + S_{s_*,k} \hat{\eta}_k + \hat{\xi}_{s_*,k}\nnum\\
    &\hat{\xi}_{s_*,k} = 
         L_{s_*,k} (z_{k}^{s_*} - \bold{1}\mu_{s_*,k} - \bold{1}S_{s_*,k}\hat{\eta}_{k|k-1})
         \end{align*}\begin{align}
&P_{s_*,k}^q=S_{s_*,k}K_kS_k P_{k|k-1}^\eta(S_{s_*,k}K_kS_k)^\top+P_{s_*,k}^\epsilon L_{s_*,k}L_{s_*,k}^\top\nnum\\
&+S_{s_*,k}K_k(P_k^\epsilon+E_kP_k^\xi E_k^\top) (S_{s_*,k}K_k)^\top\nnum\\
&+S_{s_*,k}K_kP_k^{s,s_*,\epsilon}L_{s_*,k}^\top
+L_{s_*,k}^\top P_k^{s_*,s,\epsilon}K_k^\top S_{s_*,k}^\top\nnum\\
&+S_{s_*,k}K_kS_k G_{k-1}M_k((P_k^\epsilon + E_k P_k^\xi E_k^\top) K_k^\top S_{s_*,k}^\top\nnum\\
&+ P_k^{s,s_*,\epsilon}L_{s_*,k}^\top)+(S_{s_*,k} K_k (P_k^\epsilon + E_k P_k^\xi E_k^\top) \nnum\\
& + L_{s_*,k}P_k^{s_*,s,\epsilon})(S_{s_*,k}K_kS_k G_{k-1}M_k)^\top, {\rm \ if \ } s_* \in O_k\nnum\\
&P_{s_*,k}^q = S_{s_*,k}P_{k}^\eta S_{s_*,k}^\top+P_{s_*,k}^\xi, {\rm \ otherwise},
\label{eq:yhat}
\end{align}
where $z_k^{s_*}$ is the collection of outputs $z_{s,k}$ for $s =s^*$,
\begin{align}
    L_{s_*,k} =
    \left\{
    \begin{array}{cc}
             (\bold{1}^\top \bar{R}_{s_*,k}^{-1} \bold{1})^{-1}\bold{1}^\top \bar{R}_{s_*,k}^{-1}&{\rm if } \ s_* \in O_k \\
         0& {\rm otherwise} ,
    \end{array}
    \right.
    \label{eq:G}
\end{align}
and
$\bar{R}_{s_*,k}=P_{s_*,k}^{\epsilon}\II + \bold{1}
S_{s_*,k} P_{k|k-1}^\eta S_{s_*,k}^\top \bold{1}^\top-\bold{1}
S_{s_*,k}G_{k-1}M_kP_k^{s,s_*,\epsilon}
-P_k^{s_*,s,\epsilon}(\bold{1}S_{s_*,k}G_{k-1}M_k)^\top$, $P_k^{s,s_*,\epsilon}\triangleq {\mathbb E}[\epsilon_k(\epsilon_k^{s_*})^\top]$.

\subsection{Posterior estimation: Real-time local information fusion}\label{sec:posterior}

The posterior estimation of cornering stiffness utilizes the standard Kalman filtering on the bicycle model~\cite{bevly2006integrating,baffet2009estimation}, taking three outputs: 1) GPS measurement, 2) IMU measurement, and 3) prior estimate obtained in Section~\ref{sec:prior}. This method will use the prior estimate as one of the (less frequently measured) outputs. GPS and IMU sensors that are already implemented or easy to be installed are enough to estimate the cornering stiffness~\cite{baffet2009estimation}. Although it does not require additional devices, additional GPS provides a better estimation result~\cite{bevly2006integrating}. We refer to~\cite{sierra2006cornering} for various filtering algorithms with their pros and cons.

\subsection{$\LL_1$ adaptive control with proactive velocity design}\label{sec:L1big}

We implement the $\LL_1$ adaptive controller~\cite{hovakimyan2010L1} for the lane-keeping control, which provides rapid disturbance compensation within the filter bandwidth, while guaranteeing transient and steady-state performance. Different controllers should be designed for different areas, because the prior distribution of the cornering stiffness varies by location. The current section provides a controller design for one area $s$, and the same design procedure can be repeated for all other areas of interest.

Section~\ref{sec:L1} introduces the $\LL_1$ adaptive controller on its nominal system~\cite{hovakimyan2010L1}. Section~\ref{sec:bound} discusses how to transform the error dynamics~\eqref{eq:dynamic} to the nominal system for the $\LL_1$ adaptive controller using the prior distribution of the cornering stiffness. In particular, the nominal system model for the $\LL_1$ adaptive controller is determined by the mean of the prior distribution obtained in Section~\ref{sec:prior}, and a $95\%$ confidence interval of the uncertainty bounds.
Section~\ref{sec:L1vehicle} provides the design procedure for the $\LL_1$ adaptive controller and the velocity for the error dynamics.

\subsubsection{$\LL_1$ adaptive controller}\label{sec:L1}

Consider the following system:
\begin{align}
    \dot{x}(t) &= A_m x(t) + b_m (w u_{ad}(t)+ \theta^\top x(t)+ \sigma(t)) \nnum\\
    y(t)&=c^\top x(t) \quad \quad x(0)=x_0,
    \label{eq:l1dynamic}
\end{align}
where $A_m$, $b_m$, and $c$ are known system matrices/vectors, and $A_m$ is Hurwitz.
Parameter $w \in \RR$ represents the unknown input gain, and the state-dependent uncertainty is represented by $b_m \theta^\top x(t)$, where $\theta$ is an unknown vector. The uncertain parameters satisfy Assumption~\ref{asm:L1}.
The signal $\sigma(t)$ represents the time-varying external disturbance that satisfies Assumption~\ref{asm:L2}.
\begin{assumption}
We have $w \in \Omega = [w_l,w_u]$, and $\theta \in \Theta$,  where the bound $[w_l,w_u]$ and convex set $\Theta$ are known.
\label{asm:L1}
\end{assumption}
\begin{assumption}
The disturbance signal $\sigma(t)$ is continuously differentiable, and the signal and its derivative are uniformly bounded, i.e.,
$|\sigma(t)|\leq\Delta$, and $|\dot{\sigma}(t)|\leq d_\sigma<\infty$ for $\forall t\geq0$, where the bounds $\Delta$ and $d_\sigma$ are known.
\label{asm:L2}
\end{assumption}
The control input $u_{ad}(t)$ is an adaptive controller that consists of state predictor, adaptation law, and low-pass filter.
In what follows, we describe the $\LL_1$ adaptive controller.


\textbf{State predictor:}
The state predictor is given by
\begin{align*}
    \dot{\hat{x}}(t) &= A_m\hat{x}(t)+b_m(\hat{w}(t)u_{ad}(t)+\hat{\theta}^\top x(t)+\hat{\sigma}(t))\nnum\\
    \hat{y}(t)&=c^\top\hat{x}(t) \quad \quad  \hat{x}(0)=\hat{x}_0.
\end{align*}

\textbf{Adaptation laws:}
The adaptation laws are given by:
\begin{align*}
    \dot{\hat{w}}(t)&=\Gamma Proj(\hat{w}(t),-\tilde{x}^\top(t)Pb_m u_{ad}(t)) &&\hat{w}(0)=\hat{w}_0\nnum\\
    \dot{\hat{\theta}}(t)&=\Gamma Proj(\hat{\theta}(t),-\tilde{x}^\top(t)Pb_mx(t))&&\hat{\theta}(0)=\hat{\theta}_0\nnum\\
    \dot{\hat{\sigma}}(t)&=\Gamma Proj(\hat{\sigma}(t),-\tilde{x}^\top(t)Pb_m)&&\hat{\sigma}(0)=\hat{\sigma}_0,
\end{align*}
where $\tilde{x}(t)=\hat{x}(t)-x(t)$ is the prediction error, and $\Gamma>0$ is an adaptation gain, 
$Proj(\cdot,\cdot)$ is the projection operator defined in Definition B.3 in~\cite{hovakimyan2010L1}. The projection operator guarantees that each estimate remains in its desired domain. Matrix $P$ is a symmetric positive definite matrix, solving the algebraic Lyapunov equation
$A_m P + P A_m^\top = -Q$ for a given symmetric positive definite matrix $Q$.

\textbf{Control law:} The adaptive control input is designed by
\begin{align*}
    u_{ad}(s) = -k D(s)(\hat{\eta}(s)-k_gr(s)),
\end{align*}
where $\hat{\eta}(t)=\hat{w}(t)u_{ad}(t)+\hat{\theta}^\top(t)x(t)+\hat{\sigma}(t)$ and $k_g = -1/(c^\top A_m^{-1}b_m)$, and $k>0$ is a constant.
The signal $r(s)$ is the Laplace transform of the reference signal, and $D(s)$ is a strictly proper transfer function that leads to a strictly proper stable low-pass filter
\begin{align*}
    C(s) = \frac{wkD(s)}{1+wkD(s)}
\end{align*}
with $C(0)=1$. We choose $D(s)=1/s$ in this paper.
We need to choose the controller such that
the $\LL_1$-norm condition is satisfied: $\|G(s)\|_{\LL_1}L<1$, where $G(s)=H(s)(1-C(s))$, $H(s)=(s\II-A_m)^{-1}b_m$, and $L=\max_{\theta \in \Theta}\|\theta\|_1$.
Since $\theta$ is constant and $D(s)=1/s$, the $\LL_1$-norm condition reduces to
\begin{align}
    A_g=
    \left[
    \begin{array}{cc}
         A_m+b_m\theta^\top&b_m w  \\
         -k\theta^\top& -kw
    \end{array}
    \right]
    \label{eq:Ag}
\end{align}
being Hurwitz for all $\theta \in \Theta$ and $w \in \Omega_0$.

\subsubsection{System transformation and bounds of uncertainties}\label{sec:bound}

The system~\eqref{eq:dynamic} is uncertain, where the system matrices $A(V,C_f,C_r)$ and $b(C_f)$ depend on unknown cornering stiffness $C_f$ and $C_r$, while $A_m$ and $b_m$ in~\eqref{eq:l1dynamic} are known.
We will use the mean values $\hat{C}_f = \hat{q}_{s_*,k}$ (and $\hat{C}_r=\hat{q}_{s_*,k}'$ for rear cornering stiffness) of the prior distribution to construct uncontrolled nominal system matrices, i.e., $A(V,\hat{C}_f,\hat{C}_r)$ and $b(\hat{C}_f)$.
Consider the control input $u = u_{m} + u_{ad}$, where we will later choose $u_{m} \triangleq -k_m x$ such that $A_m(V) = A(V,\hat{C}_f,\hat{C}_r)-b_m k_m$ becomes Hurwitz and $b_m \triangleq b(\hat{C}_f)$.
Then, the system~\eqref{eq:dynamic} becomes the nominal system~\eqref{eq:l1dynamic} for the $\LL_1$ adaptive controller, where the following relations approximate the uncertainties:
\begin{align}
    &b(C_f)=b_m w\nnum\\
    &\theta = \frac{1}{w}b_m^\dagger(A(V,C_f,C_r)-A(V,\hat{C}_f,\hat{C}_r)) + k_m (\frac{1}{w}-1)\nnum\\
    &\sigma = b_m^\dagger g \dot{p}^{\psi,des}
    \label{app:model}
\end{align}
with $b_m^\dagger = [0, \frac{m}{4\hat{C}_f},0,\frac{I_z}{4\hat{C}_f\ell_f}]$.

It is required to approximate the bounds of uncertainties $\Theta$, $\Omega$, $\Delta$, and $d_\sigma$ to design the $\LL_1$ adaptive controller. To provide those sets, we assume that the cornering stiffness's actual value is bounded by its $95\%$ confidence interval of the prior distribution $\mathcal{N}(\hat{C}_f,P^{C_f}) = \mathcal{N}(\hat{q}_{s_*,k},P_{s_*,k}^q)$ (or $\mathcal{N}(\hat{C}_r,P^{C_r}) = \mathcal{N}(\hat{q}_{s_*,k}',P_{s_*,k}^{q'})$ for rear cornering stiffness), i.e., given the prior distributions, we have constants $\underline{C}_f,\bar{C}_f, \underline{C}_r$, and $\bar{C}_r$ that
\begin{align}
    C_f \in [\underline{C}_f,\bar{C}_f], \ 
    C_r\in [\underline{C}_r,\bar{C}_r].
\label{app:C}
\end{align}

Assumption~\ref{asm1} is a mild condition, because the typical vehicle model satisfies $\ell_r\geq \ell_f$ and $I_z \geq m$, as shown in~\cite{rajamani2011vehicle}. In Assumption~\ref{asm2}, the first condition implies that the road's curve is bounded, and the second condition implies that the change of the curve is bounded.

\begin{assumption}[Vehicle model] The vehicle model satisfies $\ell_r\geq \ell_f$ and $I_z\frac{\ell_r}{\ell_f}-m\geq0$.
\label{asm1}
\end{assumption}

\begin{assumption}[Radius of road]
We have
$R\geq \underline{R}$ and $|\frac{\dot{R}}{R^2}| \leq \bar{R}_d$ for some $\underline{R},\bar{R}_d > 0$.
\label{asm2}
\end{assumption}

\begin{lemma}
Consider Assumptions~\ref{asm1} and~\ref{asm2}.
Given~\eqref{app:model} and~\eqref{app:C}, the bounds of uncertainties are found by
\begin{align}
    \Omega &= [\frac{\underline{C_f}}{\hat{C}_f},\frac{\bar{C_f}}{\hat{C}_f}]\nnum\\
    \Delta(V) &= \frac{1}{2\hat{C}_f \underline{R}}(2\bar{C}_f\ell_f+\bar{C}_r\ell_r(\frac{\ell_r}{\ell_f}-1)+\frac{mV^2}{2})\nnum\\
    d_{\sigma}(V) &= \frac{\bar{R}_d}{2\hat{C}_f }(2\bar{C}_f\ell_f+\bar{C}_r\ell_r(\frac{\ell_r}{\ell_f}-1)+\frac{mV^2}{2})\nnum\\
    \Theta(V) &= \frac{1}{V}(\Theta_1 \times \Theta_2 \times\Theta_3\times\Theta_4),
    \label{eq:bounds}
\end{align}
where
\begin{align}
    \Theta_1&= k_{m}^{(1)} V \Xi, \quad \Theta_3=k_{m}^{(3)} V \Xi\nnum\\
    \Theta_2 &= [-\frac{(m+I_z)(\bar{C}_f-\hat{C}_f)}{2 m  \underline{C}_f}+\frac{(I_z\frac{\ell_r}{\ell_f}-m)(\underline{C}_r-\hat{C}_r)}{2 m  \bar{C}_f},\nnum\\
    &-\frac{(m+I_z)(\underline{C}_f-\hat{C}_f)}{2m  \bar{C}_f}+\frac{(I_z\frac{\ell_r}{\ell_f}-m)(\bar{C}_r-\hat{C}_r)}{ 2 m  \underline{C}_f}]\nnum\\
    &+k_{m}^{(2)} V \Xi\nnum\\
    \Theta_4 & = [-\frac{(m+I_z)(\bar{C}_f-\hat{C}_f)}{2 m  \underline{C}_f}+\frac{(I_z\frac{\ell_r}{\ell_f}-m)(\underline{C}_r-\hat{C}_r)}{2 m  \bar{C}_f},\nnum\\
    &-\frac{(m+I_z)(\underline{C}_f-\hat{C}_f)}{2 m  \bar{C}_f}+\frac{(I_z\frac{\ell_r}{\ell_f}-m)(\bar{C}_r-\hat{C}_r)}{2 m  \underline{C}_f}]\nnum\\
    &+k_{m}^{(4)} V \Xi
    \label{app:theta}
\end{align}
and $k_{m}^{(i)}$ is the $i^{th}$ element of $k_m$, and $\Xi \triangleq [\frac{\hat{C}_f}{\bar{C}_f}-1, \frac{\hat{C}_f}{\underline{C}_f}-1]$.
\label{lem1}
\end{lemma}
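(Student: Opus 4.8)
The plan is to substitute the explicit expressions for the input gain $w$, the disturbance $\sigma$, and the parameter vector $\theta$ given in~\eqref{app:model}, carry out the (routine) linear-algebra simplification, and then extremize the resulting scalar expressions over the confidence box~\eqref{app:C}; Assumptions~\ref{asm1} and~\ref{asm2} are invoked only to fix the signs of the various coefficients and to bound the road radius. The gain bound is immediate: from $b(C_f)=b^{o}=[0,\tfrac{2C_f}{m},0,\tfrac{2C_f\ell_f}{I_z}]^{\top}$ and $b_m=b(\hat C_f)$ one has $b(C_f)=(C_f/\hat C_f)\,b_m$, so $b(C_f)=b_m w$ forces $w=C_f/\hat C_f$; since $\hat C_f>0$ this map is increasing and~\eqref{app:C} gives $\Omega=[\underline C_f/\hat C_f,\bar C_f/\hat C_f]$.

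For the disturbance bounds I would substitute $b_m^{\dagger}=[0,\tfrac{m}{4\hat C_f},0,\tfrac{I_z}{4\hat C_f\ell_f}]$, the column $g(V,C_f,C_r)$, and $\dot p^{\psi,des}=V/R$ into $\sigma=b_m^{\dagger}g\,\dot p^{\psi,des}$. On the two nonzero rows of $g$ the factor $V$ cancels the $1/V$, and collecting terms yields $\sigma=-\tfrac{1}{2\hat C_f R}(2C_f\ell_f+C_r\ell_r(\tfrac{\ell_r}{\ell_f}-1)+\tfrac{mV^2}{2})$. By Assumption~\ref{asm1} ($\ell_r\ge\ell_f$) the parenthesized quantity is a sum of nonnegative terms, so $|\sigma|$ is maximized by $C_f=\bar C_f$, $C_r=\bar C_r$ and, by Assumption~\ref{asm2}, $R=\underline R$, which is exactly $\Delta(V)$ in~\eqref{eq:bounds}. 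Since $V$ and $C_f,C_r$ are held fixed on a given area, the only time-varying factor in $\sigma$ is $1/R$, so $\dot\sigma=\tfrac{1}{2\hat C_f}(2C_f\ell_f+C_r\ell_r(\tfrac{\ell_r}{\ell_f}-1)+\tfrac{mV^2}{2})\tfrac{\dot R}{R^{2}}$; bounding $|\dot R/R^{2}|\le\bar R_d$ and the parenthesis as before gives $d_\sigma(V)$.

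For the parametric bound, write $\theta=\tfrac{1}{w}\,b_m^{\dagger}(A(V,C_f,C_r)-A(V,\hat C_f,\hat C_r))+k_m(\tfrac{1}{w}-1)$ with $\tfrac{1}{w}=\hat C_f/C_f$. The second summand is $k_m(\hat C_f/C_f-1)$, and since $C_f\mapsto\hat C_f/C_f-1$ is decreasing, its range over~\eqref{app:C} is precisely $\Xi=[\hat C_f/\bar C_f-1,\hat C_f/\underline C_f-1]$, which supplies the $k_m^{(i)}V\Xi$ terms after the common $1/V$ scaling in~\eqref{app:theta}. For the first summand I would evaluate $b_m^{\dagger}(A(V,C_f,C_r)-A(V,\hat C_f,\hat C_r))$ componentwise; each entry is affine in the deviations $C_f-\hat C_f$ and $C_r-\hat C_r$, with the coefficient of $C_r-\hat C_r$ proportional to $\tfrac{\ell_r}{\ell_f}-1\ge0$ and the combined mass/inertia coefficient proportional to $I_z\tfrac{\ell_r}{\ell_f}-m\ge0$ by Assumption~\ref{asm1}. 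With those signs fixed, and using that the scalar factor $\hat C_f/C_f$ is monotone in $C_f$, each component of $\theta$ attains its extrema over the box $[\underline C_f,\bar C_f]\times[\underline C_r,\bar C_r]$ at a corner; reading off those corner values produces the intervals $\Theta_1,\dots,\Theta_4$, and the admissible set for $\theta$ is their Cartesian product scaled by $1/V$.

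I expect the only nontrivial part to be this last sign bookkeeping. Everything past the substitution of~\eqref{app:model} is interval arithmetic, but it is valid \emph{only} because Assumptions~\ref{asm1} and~\ref{asm2} ($\ell_r\ge\ell_f$, $I_z\ell_r/\ell_f\ge m$, $R\ge\underline R$, $|\dot R/R^{2}|\le\bar R_d$) keep every relevant coefficient of definite sign; this is what allows each maximization/minimization to be replaced by the explicit endpoints in~\eqref{eq:bounds}--\eqref{app:theta} rather than a looser bound, and it is the step where an error would most easily creep in.
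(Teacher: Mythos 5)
Your proposal follows essentially the same route as the paper's proof: read off $w=C_f/\hat C_f$ from $b=b_m w$, substitute $b_m^\dagger$, $g$, and $\dot p^{\psi,des}=V/R$ to obtain the closed forms of $\sigma$ and $\dot\sigma$ and bound them via Assumptions~\ref{asm1} and~\ref{asm2}, and bound the components of $\theta$, which are affine in $C_f-\hat C_f$ and $C_r-\hat C_r$, using the sign conditions $\ell_r\ge\ell_f$ and $I_z\ell_r/\ell_f-m\ge0$. The only cosmetic difference is that the paper's intervals in~\eqref{app:theta} are obtained by bounding the factor $1/w$ and the deviations independently (hence the mixed $\underline C_f$/$\bar C_f$ denominators), which is slightly more conservative than the exact corner evaluation you describe, but both yield valid enclosures.
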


Notice that the sets $\Theta(V)$, $\Delta(V)$, and $d_{\sigma}(V)$ are a function of the velocity $V$.

The proofs of Lemmas can be found in Appendix~\ref{app:proof}.

\subsubsection{$\LL_1$ adaptive controller and nominal velocity design}\label{sec:L1vehicle}

The $\LL_1$ adaptive controller guarantees transient and steady-state performance with respect to the reference system and design system. The reference system is the non-adaptive version of the $\LL_1$ adaptive controller.
The design system is an ideal system that does not depend on the uncertainties. According to Theorem 2.2.2 in~\cite{hovakimyan2010L1}, the performance of the system can be arbitrary close to the reference system ($x_{ref}(t)$ and $u_{ref}(t)$) by increasing the adaptation gain $\Gamma$ without sacrificing robustness.
Lemma 2.1.4 in~\cite{hovakimyan2010L1} analyzes the error between the reference system and the design system ($\|x_{ref}-x_{des}\|_{\LL_{\infty}}$ and $\|u_{ref}-u_{des}\|_{\LL_{\infty}}$),
where its upper bound is proportional to $\|G(s)\|_{\LL_1}$.
The term $\|G(s)\|_{\LL_1}$ can be close to zero by arbitrarily increasing the filter bandwidth $k$. However, this performance improvement trades off with the robustness.
In particular, the time-delay margin decreases to zero, as $k$ increases to infinity. Therefore, we need to design $k_m$, $C(s)$, and $V$ balancing the performance and robustness optimally.


The matrix $A_m(V)$ must be Hurwitz, but it depends both on gain $k_m$ and velocity $V$. To relax this complexity, we propose to use the common Lyapunov function approach. We first design control gains $k_m$ and $P$ such that $A_m(V)$ is Hurwitz for any velocity $V \in [V_{\min},V_{\max}]$,
where $V_{\min}$ and $V_{\max}$ are the minimum and maximum velocity of the area.
Upon that, we  choose the velocity $V$ and filter $C(s)$ simultaneously through an optimization problem.

Given $\hat{C}_f$ and $\hat{C}_r$, we should choose constant vector $k_m$ and symmetric positive definite matrix $P$ such that
\begin{align}
    A_m(V) P + P A_m^\top(V) <0
    \label{eq:lyap}
\end{align}
holds for all $V_{\min} \leq V \leq V_{\max}$.
One does not need to explore the entire domain of $V$, but only need to check the minimum $V_{\min}$ and maximum $V_{\max}$. 

\begin{lemma}
Assume that there exists $0\leq \alpha(V) \leq 1$ such that $A_m(V) = \alpha(V) A_m(V_{\min}) + (1-\alpha(V)) A_m(V_{\max})$ for any $V_{\min} \leq V \leq V_{\max}$.
Then there exists positive definite matrix $Q(V)$ 
such that $A_m(V) P + P A_m^\top(V) =-Q(V)$ for any $V_{\min} \leq V \leq V_{\max}$ if and only if $A_m(V_{\min}) P + P A_m^\top(V_{\min})=-Q_{\min}$, and $A_m(V_{\max}) P + P A_m^\top(V_{\max}) = -Q_{\max}$ for some symmetric positive definite matrices $P$, $Q_{\min}$, and $Q_{\max}$.
\label{lem:Am}
\end{lemma}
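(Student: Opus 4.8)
The plan is to prove the equivalence directly, exploiting two facts: by hypothesis the map $V \mapsto A_m(V)$ is an affine interpolation of its two endpoint values, and for a \emph{fixed} $P$ the Lyapunov operator $X \mapsto A_m(V) X + X A_m^\top(V)$ is linear in $A_m(V)$. Thus a convex combination in $A_m$ transports directly to a convex combination in the right-hand side $-Q$.

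The ``only if'' direction is immediate. If a common symmetric positive definite $P$ and a family of positive definite matrices $Q(V)$ satisfy $A_m(V) P + P A_m^\top(V) = -Q(V)$ for every $V \in [V_{\min},V_{\max}]$, then evaluating at $V=V_{\min}$ and $V=V_{\max}$ yields the two endpoint equations with $Q_{\min} := Q(V_{\min}) \succ 0$ and $Q_{\max} := Q(V_{\max}) \succ 0$ and the same $P$.

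For the ``if'' direction I would fix the common $P \succ 0$ and the endpoint solutions $Q_{\min}, Q_{\max} \succ 0$, take an arbitrary $V \in [V_{\min},V_{\max}]$ with interpolation coefficient $\alpha = \alpha(V) \in [0,1]$, and define $Q(V) := \alpha Q_{\min} + (1-\alpha) Q_{\max}$. Substituting $A_m(V) = \alpha A_m(V_{\min}) + (1-\alpha) A_m(V_{\max})$ and using linearity in each argument gives
\begin{align*}
A_m(V) P + P A_m^\top(V) &= \alpha\bigl(A_m(V_{\min}) P + P A_m^\top(V_{\min})\bigr) + (1-\alpha)\bigl(A_m(V_{\max}) P + P A_m^\top(V_{\max})\bigr)\\
&= -\alpha Q_{\min} - (1-\alpha) Q_{\max} = -Q(V).
\end{align*}
Positivity of $Q(V)$ follows since for any nonzero $x$ we have $x^\top Q(V) x = \alpha\, x^\top Q_{\min} x + (1-\alpha)\, x^\top Q_{\max} x > 0$: both quadratic forms are positive and at least one of the coefficients $\alpha$, $1-\alpha$ is strictly positive, so the degenerate cases $\alpha \in \{0,1\}$ need no separate treatment.

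I do not expect a genuine obstacle here. The only points that need care are (i) the statement is about a \emph{single} $P$ valid across the whole interval, so no $V$-dependent choice of $P$ may be used; and (ii) positivity of $Q(V)$ really does use that \emph{both} endpoint matrices $Q_{\min}$ and $Q_{\max}$ are positive definite. It is worth remarking that this lemma is exactly what justifies the computational shortcut stated just before it in Section~\ref{sec:L1vehicle}, namely that to certify the Lyapunov inequality~\eqref{eq:lyap} for all $V \in [V_{\min},V_{\max}]$ it suffices to check the two extreme velocities $V_{\min}$ and $V_{\max}$.
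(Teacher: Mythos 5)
Your proposal is correct and follows essentially the same route as the paper's own proof: the ``only if'' direction by evaluating at the endpoints, and the ``if'' direction by substituting the affine combination $A_m(V) = \alpha(V) A_m(V_{\min}) + (1-\alpha(V)) A_m(V_{\max})$ into the (bilinear) Lyapunov expression and setting $Q(V)=\alpha(V) Q_{\min} + (1-\alpha(V))Q_{\max}$. Your explicit verification that $Q(V)$ remains positive definite for $\alpha \in \{0,1\}$ is a minor addition the paper leaves implicit, but it is not a different argument.
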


For any $V \in [V_{\min},V_{\max}]$, we have
\begin{align*}
A_m(V) &= \alpha(V) A_m(V_{\min}) + (1-\alpha(V)) A_m(V_{\max})
\end{align*}
for $\alpha(V)=\frac{\frac{V_{\min}V_{\max}}{V}-V_{\min}}{V_{\max}-V_{\min}}$. Therefore, by Lemma~\ref{lem:Am}, we can choose $k_m$ and $P$ such that the condition in~\eqref{eq:lyap} holds both for $V_{\min}$ and $V_{\max}$.
The adaptation gain $\Gamma>0$ can be chosen as a very large number to enhance the adaptation performance.

We can choose the filter gain $k$ and the velocity $V$
balancing the performance and robustness.
The performance is characterized by $\|G(s)\|_{\LL_1}$ as in~\cite{li2008filter}. The robustness is characterized by a lower bound of $k$, which prevents the time-delay margin from converging to zero.
The optimization problem can be formulated by
\begin{align}
    &\max_{k,V \in [V_{\min},V_{\max}]} V\nnum\\
    &s.t. \ \|G(s)\|_{\LL_1} \leq \lambda_{gp}, {\rm \ for \ } \forall w \in \Omega\nnum\\
    &\quad \ k \leq \bar{k}
    \label{eq:prog}
\end{align}
for some constants $\bar{k}>0$ and $\lambda_{gp}<\frac{1}{L}$. Recall that that $G(s) = H(s)(1-C(s))$. Given $\lambda_{gp}$, one could find the performance bounds of $\|x_{ref}-x_{des}\|_{\LL_1}$ and $\|u_{ref}-u_{des}\|_{\LL_1}$ in Lemma 7 in~\cite{cao2008design}.

\subsubsection{Real-time controller update}\label{sec:parameter}

It is critically important to ensure that the matrices $A_m$ and $A_g$ are Hurwitz for all possible uncertainties. Given the posterior distribution ${\mathcal{N}}(\hat{C}_f^{pos},P^{C_f^{pos}})$ (or ${\mathcal{N}}(\hat{C}_r^{pos},P^{C_r^{pos}})$ for rear cornering stiffness) from Kalman filter in Section~\ref{sec:posterior}, we can construct the $95\%$ confidence interval of the posterior distribution of $C_f$ and $C_r$. We check online whether $A_g(V)$ is Hurwitz for the new set of uncertainties. If it does not hold, we update $k$ in real-time such that $A_g(V)$ is Hurwitz:
\begin{align*}
k = &\argmin_{k} |k-k_*|\nnum\\
& \ {\rm s.t. } \ A_g(V) \ {\rm being \ Hurwitz},
\end{align*}
where $k_*$ is the current gain.
It is worth to note that $A_m$ does not need to be re-tuned, because it depends only on $\hat{C}_f$ and $\hat{C}_r$, and not on the bounds of uncertainties.
Furthermore, we design  it to be Hurwitz for the entire possible velocity range.
 


\section{Simulation}\label{sec:simul}

The current section demonstrates the performance of the proposed control architecture. In particular, Section~\ref{sec:sim:pri} presents the resilient estimation performance of FRRF in the presence of (biased) unmodeled uncertainty.
Based on the prior estimate, we design the $\LL_1$ adaptive controller for the areas of interest and illustrate the lane keeping performance in Section~\ref{sec:sim:adp}. Lastly, we show a trend of maximum velocity in~\eqref{eq:prog} with respect to changing nominal cornering stiffness in Section~\ref{sec:sim:speed}.
All values are in standard SI units; $m$ (meter) for $\ell_f$, $\ell_r$, $R$, and $x_1$; $rad$ for $\delta$ and $x_2$; $m/s$ for $V$; $N/rad$ for $C_f$ and $C_r$; $kg$ for $m$; $kg \cdot m^2$ for $I_z$.

\subsection{Prior estimation by FRRF}\label{sec:sim:pri}

We consider the square area that is divided into $25$ identical small squares, i.e., $n_D=25$. The ground truth cornering stiffness holds $C_f =  C_r$ for all the areas, although this information is unknown to the control authority.

In this simulation, we will conduct one FRRF algorithm and use the distribution of $q_{s,k}$ to estimate both cornering stiffnesses $C_f$ and $C_r$, i.e., $\hat{C}_f=\hat{C}_r$, $\underline{C}_f=\underline{C}_r$, and $\bar{C}_f=\bar{C}_r$.
Consider the stochastic process $q_{s,k}$ described in~\eqref{eq:spatio} and~\eqref{eq:eta}.
Matrices $S_{s,k}$ is chosen to be the W-wavelets as in~\cite{cressie2010fixed,kwong1994w}. Matrix $H_k$ and $G_k$ are chosen to be identity matrices.
Noises are zero-mean Gaussian with known covariance $P_{s,k}^{\epsilon}=10$, $P_{s,k}^{\xi}=100$, and $P_{k}^{\zeta}=100\II$.
Predictive cornering stiffnesses from the weather forecast are randomly generated by uniform distribution for each time $k$ and each area, i.e. $\mu_{s,k} \sim Unif(C_{ice}, C_{dry})$ for $\forall k,s$, except $s=1,2$, where $C_{ice} = 19000$ and $C_{dry} = 84000$ are the conservative lower bound and upper bound.
We intentionally choose time-invariant $\mu_{1,k}$ and $\mu_{2,k}$ for all $k$ to compare the fine-scale tracking performance.
The data center obtains the measurement $z_{s,k}$ (for each area) as a Poisson distribution with parameter $\lambda=20$. The unmodeled system uncertainty $d_k$ is made up of $d_k = 100\sin(k\pi)$.

Given the initial condition $\hat{\eta}_0 = { \bold{0}}$ with covariance $P_0^\eta = 1000\II$, we conduct FRRF algorithm in Section~\ref{sec:prior}, and present the simulation results in Figures~\ref{fig:heatmap} and~\ref{fig:prior}.
For each time $k$, FRRF generates a heat map for the cornering stiffness. Figure~\ref{fig:heatmap} presents a series of heat maps produced by FRRF algorithm, where the color represents the mean value $\hat{q}_{s,k}$ of the corresponding area $s$. 

Figure~\ref{fig:prior} compares the tracking errors when the outputs are sparsely measured (as a Poisson with $\lambda=20$) and are fully measured at areas $1$ and $2$.
Areas $1$ and $2$ are the left bottom corner and its right cell, respectively.
The estimation errors for all areas remain in their noise level. FRRF algorithm estimates the ground truth cornering stiffness resiliently, where the errors do not depend on the presence of $d_k$, as shown in the first subfigure.
FRRF with the full measurement exhibits an improved tracking performance of fine-scale variation a lot than that with the spares measurement, as presented in the second and third subfigures. This is because FRRF with the full measurement successfully reduces the estimation error by compensating for unmodeled uncertainty at each iteration.
The average trace norm of variance in the whole area is $\trace(P_{k}^{q,full})=1983.7$ with the full measurement and $\trace(P_{k}^{q,\lambda = 20})=3190.4$ with the spares measurement. 

\begin{figure}[!thpb]
    \centering
    \includegraphics[width=\linewidth]{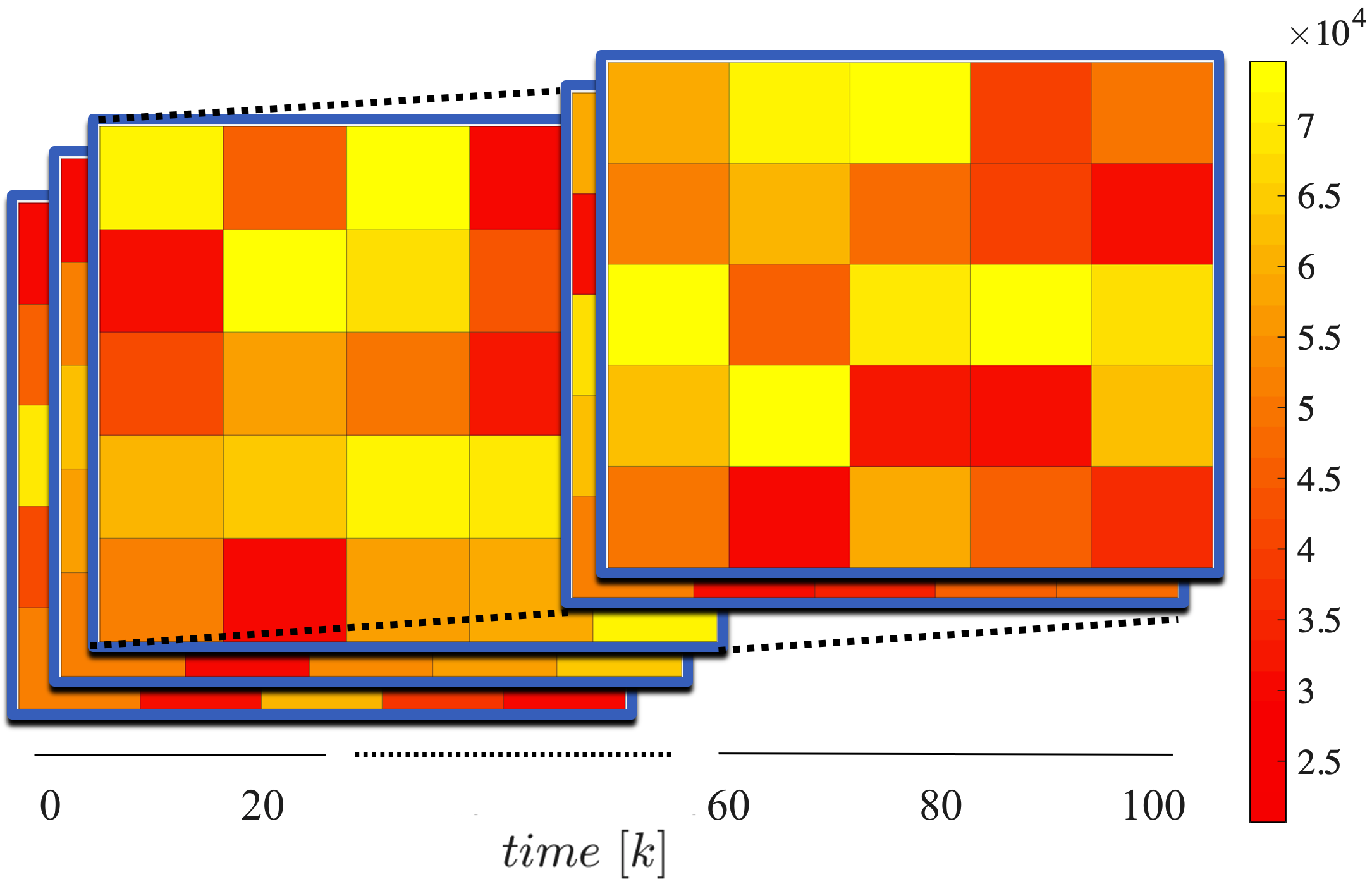}
    \caption{Prior estimation heatmap. The color represents the mean value of the estimate in the corresponding area.}
    \label{fig:heatmap}
\end{figure}

\begin{figure}[!thpb]
    \centering
    \includegraphics[width=\linewidth]{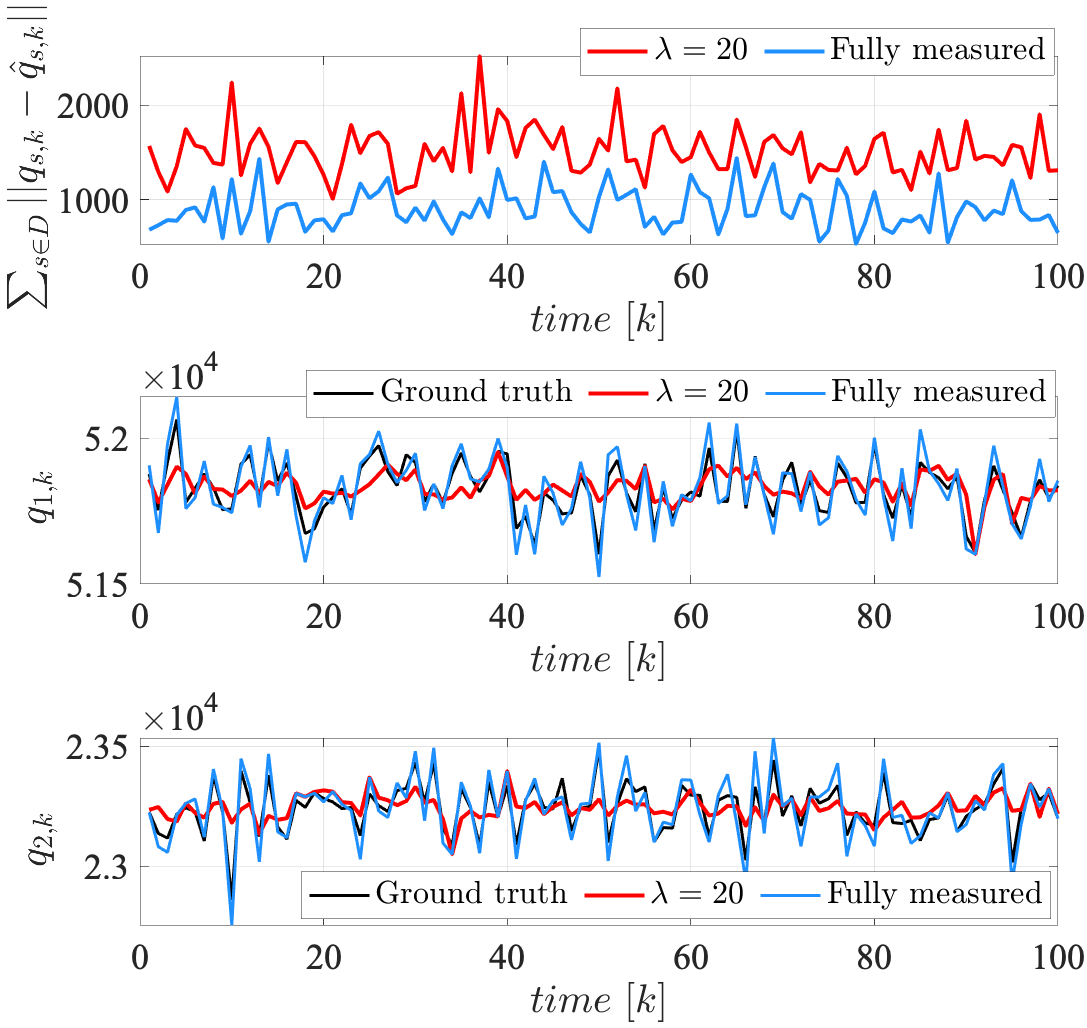}
    \caption{Prior estimation performance; (top) total estimation error; (middle, bottom) ground truth cornering stiffness and estimates with the full measurement and spares measurement at areas $1$ and $2$.}
    \label{fig:prior}
\end{figure}

\subsection{Proactive $\LL_1$ adaptive control}\label{sec:sim:adp}

\begin{figure}[thpb]
    \centering
    \includegraphics[width=\linewidth]{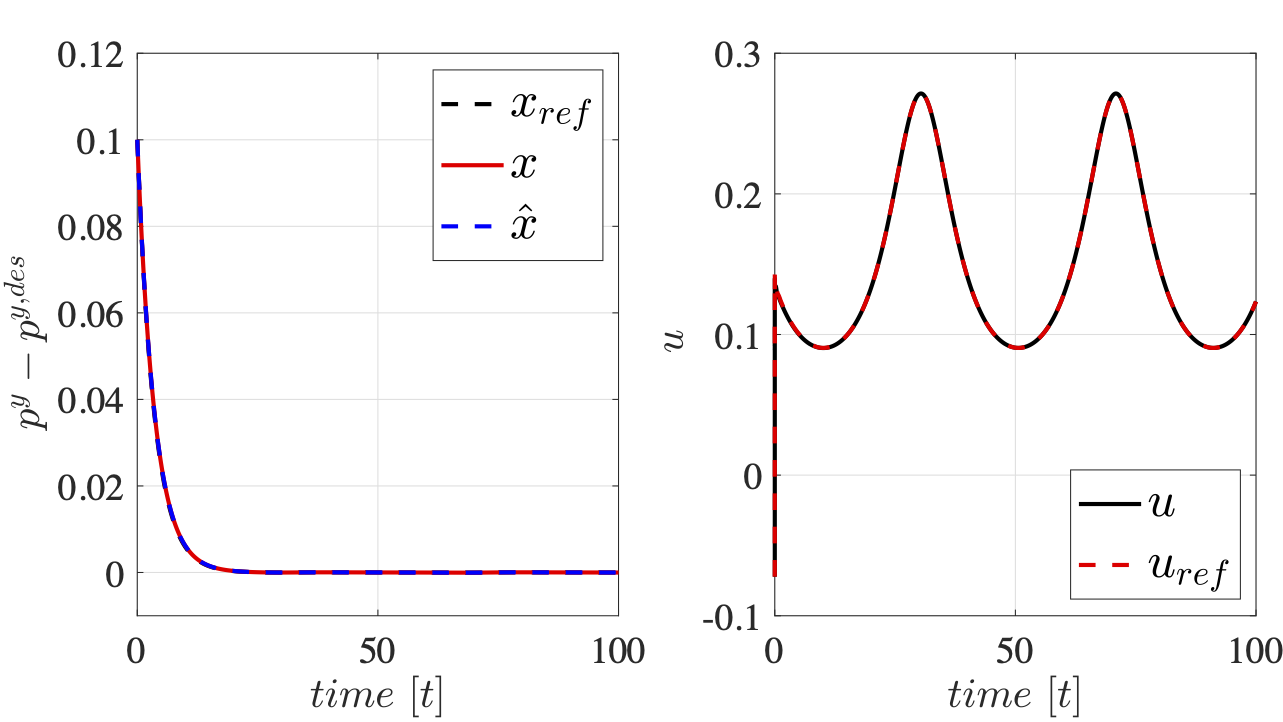}
    \caption{Raining condition. Error states and control inputs in area $1$ ($C_{1,f}=C_{1,r}=51867$).}
    \label{fig:rain}
\end{figure}

\begin{figure}[thpb]
    \centering
    \includegraphics[width=\linewidth]{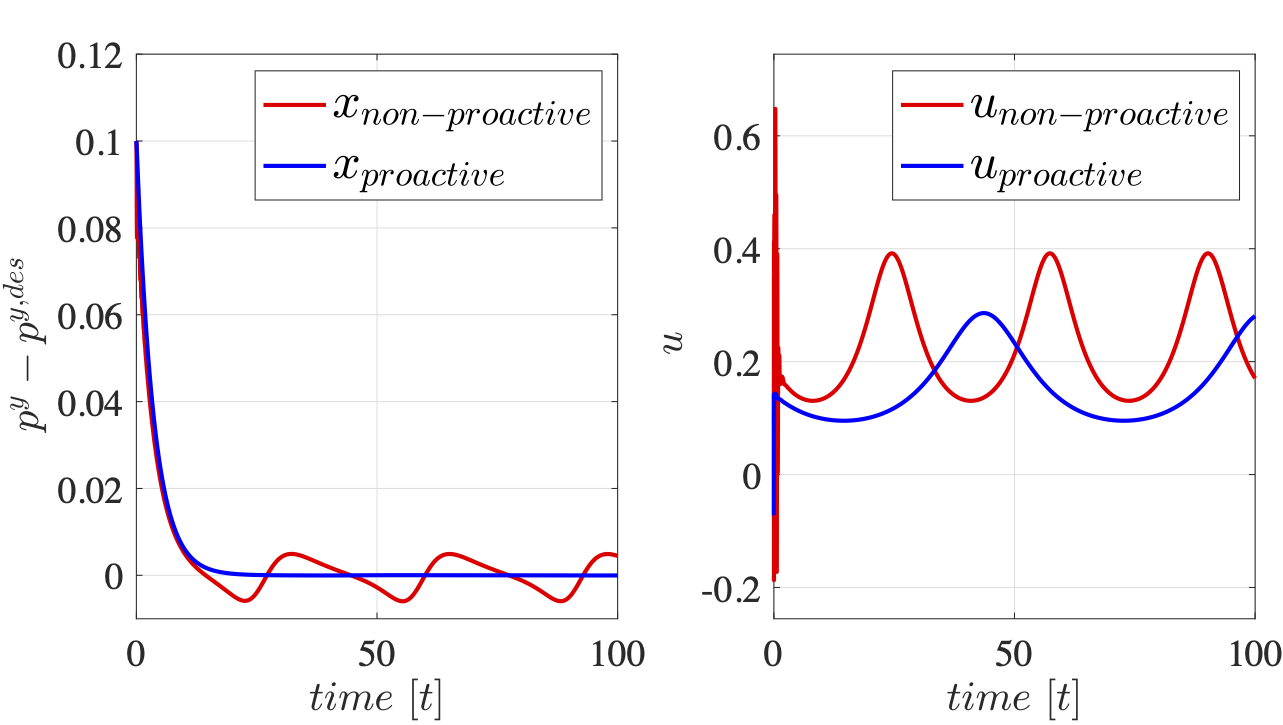}
    \caption{Snowing condition. Error states and control inputs in area $2$ ($C_{2,f}=C_{2,r}=23214$).
    }
    \label{fig:snow}
\end{figure}
The current section compares the tracking performance of the proactive $\LL_1$ adaptive control and a non-proactive version of it. We refer to~\cite{shirazi2017mathcal} to compare
the $\LL_1$ adaptive controller's performance with that of other types of controllers.

The vehicle's system parameters are as follows~\cite{shirazi2017mathcal}:
$m=1573$, $I_z=2873$, $\ell_f=1.1$, $\ell_r=1.58$. To challenge the maneuver, we choose the time-varying radius of the road 
$R(s) = 15\sin(\frac{1}{120}s)+30$ for area $1$ and area $2$, where $s$ is the arc length.
The vehicle operates in areas $1$ and $2$, where the ground truth cornering stiffnesses are $C_{1,f}=C_{1,r}=q_{1,50}=51867$, and $C_{2,f}=C_{2,r}=q_{2,50}=23214$ at time $k=50$.
Controllers are designed based on the prior distribution at time $k = 50$, i.e., $q_{1,k} \sim {\mathcal{N}}(51826,1413)$ and $q_{2,k} \sim {\mathcal{N}}(23240,1937)$. 

Performance bound is chosen to be $\lambda_{gp} = 0.585$, and $\bar{k}=10$. Given the performance bound and distributions for area $1$ and $2$ in Section~\ref{sec:sim:pri}, we design the $\LL_1$ adaptive controller for areas $1$ and $2$, as follows:
\begin{align*}
    &k_{1,m}=k_{2,m}=
    \left[
    \begin{array}{cccc}
         0.7223& 2.5855& -0.6669& 0.1873  \\
    \end{array}
    \right]^\top,
\end{align*}
$k_{1}=k_{2} = 10$, $V_{1}= 18.61$, $V_{2}= 12.96$, $\Gamma_1= \Gamma_2 = 100000$,
\begin{align*}P_{1} =
\left[
\begin{array}{cccc}
    1.9111 &   0.0053 &  0.3485  &  0.0090\\
    0.0053 &   0.0196 & -0.0052  & -0.0294\\
    0.3485 &  -0.0052 &  5.2183  &  0.0438\\
    0.0090 &  -0.0294 &  0.0438  &  0.0543
    \end{array}
    \right],
\end{align*}
and
\begin{align*}P_{2} =
\left[
\begin{array}{cccc}
    1.9064 &  0.0180  &  0.4485  &  0.0483\\
    0.0180 &  0.0636  & -0.0211  & -0.0928\\
    0.4485 & -0.0211  &  3.9649  &  0.1609\\
    0.0483 & -0.0928  &  0.1609  &  0.1834
    \end{array}
    \right].
\end{align*}
For a comparison, we also consider the non-proactive controller for area $2$ designed by:
$k_{np,m}=k_{2,m}$, $k_{np} = k_2$, $V_{np}= 22.96$, $\Gamma_{np} = \Gamma_2$, and
\begin{align*}P_{np} =
\left[
\begin{array}{cccc}
    1.9195 &   0.0153  &  0.3201  &  0.0119\\
    0.0153 &   0.0360  &  0.0095  & -0.0532\\
    0.3201 &   0.0095  &  8.0706  &  0.0908\\
    0.0119 &  -0.0532  &  0.0908  &  0.1015
    \end{array}
    \right].
\end{align*}

Figure~\ref{fig:rain} presents the performance of the proactively designed $\LL_1$ adaptive controller under the raining condition ($C_{1,f}=C_{1,r}=51867$). The controller can successfully stabilize the error dynamics under the changing road radius. With a large adaptation gain, the system performance is arbitrarily close to that of the reference system.

Figure~\ref{fig:snow} compares the proactive $\LL_1$ adaptive controller's tracking performance and the non-proactive version under the snowing condition and changing road radius.
The system with the proactive controller does not have performance degradation compared to operating in the raining condition.
We found that the non-proactive controller designed for dry road conditions (around $C_f=C_r=80000$) failed to stabilize the system.
As discussed before, one could increase $k$ to guarantee stability, but this will harm the robustness.
To illustrate the performance difference between the proactive controller and non-proactive controller without increasing $k$, we consider the controller designed for $C_f=C_r=60000$. The non-proactive controller could also stabilize the error dynamics through compensation of uncertainties, but presents a relatively large error, when the vehicle operates outside of its nominal status.

\subsection{Vehicle velocity curve}\label{sec:sim:speed}

We study a trend of maximum velocity chosen by the optimization problem~\eqref{eq:prog}.
The control parameters, performance bound, and the bound of $k$ remain unchanged throughout the range of cornering stiffness for a fair comparison.
The maximum velocity decreases as the nominal cornering stiffness decreases, as shown in Figure~\ref{fig:CVplot}.
The proposed control architecture slows down the vehicle in advance to guarantee the desired performance and robustness, when the road is expected to be slippery from the prior estimate.

\begin{figure}[!thpb]
    \centering
    \includegraphics[width=\linewidth]{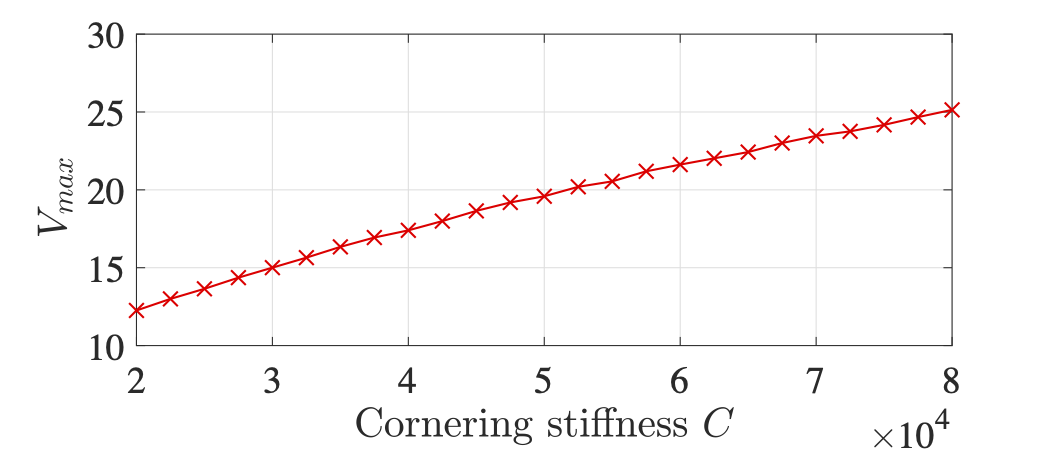}
    \caption{Velocity designed by~\eqref{eq:prog} with respect to changing nominal cornering stiffness $\hat{C}_f=\hat{C}_r$.}
    \label{fig:CVplot}
\end{figure}

\section{Conclusion}
We study a proactive robust adaptive control architecture for autonomous vehicles operating in various environmental conditions.
The weather forecast and vehicle network data are used to estimate the unknown cornering stiffness by newly developed FRRF. Given the prior estimate for multiple areas, the $\LL_1$ adaptive controller and velocity are designed for each road area, balancing the performance and robustness.
The posterior estimate is obtained by combining the prior estimate and the on-board measurement. The controller is updated based on the posterior distribution, if it violates the $\LL_1$-norm condition. The simulation demonstrates that, given the fixed desired performance, the maximum available velocity decreases as the cornering stiffness decreases.

\bibliographystyle{ieeetr}
\bibliography{a_reference}

\section{Appendix}

Section~\ref{app:FRRFder} derives FRRF algorithm, and Section~\ref{app:FRRFpro} shows its properties. Section~\ref{app:proof} presents the omitted proofs of Lemmas~\ref{lem1} and~\ref{lem:Am}.

\subsection{FRRF derivation}\label{app:FRRFder}
\textbf{Prediction of $\eta_k$.}
The previous estimate $\hat{\eta}_{k-1}$ and its covariance $P_{k-1}^\eta$ are given in the last iteration.
Assuming that the estimate $\hat{\eta}_{k-1}$ is unbiased, the uncertainty $d_{k-1}$ can be estimated from the prediction error:
\begin{align}
    \hat{d}&_{k-1}= M_k(z_k -\mu_{k} - S_{k}H_{k-1}\hat{\eta}_{k-1})\nnum\\
    & = M_k (\epsilon_k+ E_k\xi_k+S_k(\zeta_{k-1}+G_{k-1} d_{k-1}+H_{k-1}\tilde{\eta}_{k-1})),
    \label{eq:dhat}
\end{align}
where the error is a function of $M_kS_kG_{k-1} d_{k-1}$.
To provide an unbiased estimate, we will choose $M_k$ later such that $M_kS_kG_{k-1} =\II$. The error dynamics of the uncertainty estimate are
\begin{align}
\tilde{d}_{k-1}=-M_k (\epsilon_k+ E_k\xi_k+S_k(\zeta_{k-1}+H_{k-1}\tilde{\eta}_{k-1})).
\label{eq:dtilde}
\end{align}
Given $\hat{d}_{k-1}$, the current state $\eta_k$ can be predicted by the dynamical system~\eqref{eq:eta}:
\begin{align*}
    \hat{\eta}_{k|k-1} = H_{k-1}\hat{\eta}_{k-1}+G_{k-1}\hat{d}_{k-1}.
\end{align*}
The estimation error $\tilde{\eta}_{k|k-1}= \eta_k-\hat{\eta}_{k|k-1}$ becomes
\begin{align*}
\tilde{\eta}&_{k|k-1}
    = H_{k-1}\tilde{\eta}_{k-1}+G_{k-1} \tilde{d}_{k-1}+\zeta_{k-1}\nnum\\    
    &=(\II-G_{k-1} M_kS_k)H_{k-1}\tilde{\eta}_{k-1}-G_{k-1} M_k (\epsilon_k+E_k\xi_k)\nnum\\
    &+(\II-G_{k-1} M_kS_k)\zeta_{k-1},
\end{align*}
where the relation~\eqref{eq:dtilde} is applied.
These error dynamics induce the covariance update for $P_{k|k-1}^\eta$ in~\eqref{eq:pred}.

\textbf{Estimation of $\eta_k$.} Given $\hat{\eta}_{k|k-1}$, the prediction is corrected by the prediction error $z_k - \mu_{k}-S_{k}\hat{\eta}_{k|k-1}$:
\begin{align*}
    \hat{\eta}_k = \hat{\eta}_{k|k-1} + K_k (z_k - \mu_{k}-S_{k}\hat{\eta}_{k|k-1}).
\end{align*}
The estimation error becomes
\begin{align}
\tilde{\eta}_k &= (\II-K_kS_k)\tilde{\eta}_{k|k-1}-K_k(\epsilon_k+E_k\xi_k),
\label{eq:etatilde}
\end{align}
which results in the covariance $P_k^\eta$ in~\eqref{eq:estim}.

\textbf{Estimation of $q_{s_*,k}$.} Our interest is to estimate the hidden state $q_{s_*,k}$ for the query area $s_*$, which can be estimated by the process model~\eqref{eq:spatio} as follows:
\begin{align*}
    \hat{q}_{s_*,k} = \mu_{s_*,k} + S_{s_*,k} \hat{\eta}_k + \hat{\xi}_{s_*,k},
\end{align*}
where 
\begin{align*}
    \hat{\xi}_{s_*,k} = 
    \left\{
    \begin{array}{cc}
         L_{s_*,k} (z_{k}^{s_*} - \bold{1}\mu_{s_*,k} - \bold{1}S_{s_*,k}\hat{\eta}_{k|k-1}) & {\rm if } \ s_* \in O_k \\
         0 & {\rm otherwise}
    \end{array}
    \right.
\end{align*}
Since $\xi_{s_*,k}$ is associated with the area $s_*$, the measurement $z_{s,k}$ is not a function of $\xi_{s_*,k}$ if $s \neq s_*$. Therefore, the estimate $\hat{\xi}_{s_*,k}$ is available only when $s_* \in O_k$.
Since 
\begin{align*}
    \hat{\xi}_{s_*,k}&=L_{s_*,k}(\bold{1}S_{s_*,k}\tilde{\eta}_{k|k-1}+\epsilon_{k}^{s_*}+\bold{1}\xi_{s_*,k}),
\end{align*}
we need to choose the gain $L_{s_*,k}$ such that $L_{s_*,k}\bold{1}=\II$ for an unbiased estimate.
Then, the estimation error becomes
\begin{align}
    \tilde{\xi}_{s_*,k}&=L_{s_*,k}(\bold{1}S_{s_*,k}\tilde{\eta}_{k|k-1}+\epsilon_{k}^{s_*}).
    \label{eq:xitilde}
\end{align}
The estimation error for $q_{s_*,k}$ is given by
\begin{align}
&\tilde{q}_{s_*,k}=S_{s_*,k}\tilde{\eta}_k+\tilde{\xi}_{s_*,k}\nnum\\
&=-S_{s_*,k}K_kS_k\tilde{\eta}_{k|k-1}-S_{s_*,k}K_k(\epsilon_k+E_k\xi_k)-L_{s_*,k}\epsilon_{k}^{s_*},
\label{eq:ytilde}
\end{align}
where the relations $L_{s_*,k}\bold{1}=\II$ and~\eqref{eq:etatilde} are applied.
If $s_* \notin O_k$, then we have
\begin{align*}    &\tilde{q}_{s_*,k}=S_{s_*,k}\tilde{\eta}_k+\xi_{s_*,k}.
\end{align*}
Considering the cross relations between the error terms, we can find the covariance $P_{s_*,k}^q$ in~\eqref{eq:yhat}.

\subsection{Properties of the FRRF}\label{app:FRRFpro}
\begin{lemma}
Assume $\hat{\eta}_0$ is an unbiased estimate. The estimates $\hat{\eta}_k$, $\hat{d}_{k-1}$ and $\hat{q}_k$ are the best linear unbiased estimates (BLUE), if the gains $M_k$, $K_k$, and $L_{s_*,k}$ are chosen by
~\eqref{eq:M},~\eqref{eq:K}, and~\eqref{eq:G}, respectively.
\label{lemma:BLUE}
\end{lemma}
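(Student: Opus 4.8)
The plan is to argue by induction on the time index $k$, in each stage identifying the relevant update as the solution of a (generalized) weighted least-squares problem and invoking the Gauss--Markov theorem to conclude that the stated gain is the unique minimizer of the error covariance in the positive semidefinite order. The base case is the hypothesis that $\hat\eta_0$ is unbiased; the induction hypothesis is that $\hat\eta_{k-1}$ is linear and unbiased with error covariance $P_{k-1}^\eta$, and that $\epsilon_k,\xi_k,\zeta_{k-1}$ are zero-mean and mutually independent and independent of $\tilde\eta_{k-1}$. Each of the three estimators is shown to be BLUE in turn, and the covariance bookkeeping produces exactly the recursions \eqref{eq:pred}, \eqref{eq:estim}, and \eqref{eq:yhat}.

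First, for $\hat d_{k-1}$: by \eqref{eq:dhat} and the induction hypothesis, the residual $r_k\triangleq z_k-\mu_k-S_kH_{k-1}\hat\eta_{k-1}$ is a linear measurement of the unknown input $d_{k-1}$ with regressor $S_kG_{k-1}$ and additive zero-mean noise $\epsilon_k+E_k\xi_k+S_k(\zeta_{k-1}+H_{k-1}\tilde\eta_{k-1})$ of covariance $R_k$. Restricting to linear estimators $\hat d_{k-1}=M_k r_k$, unbiasedness forces $M_kS_kG_{k-1}=\II$ (equivalently the estimable-subspace condition captured by the Moore--Penrose pseudoinverse when $S_kG_{k-1}$ is rank deficient), and minimizing $\trace(W P^{d})$ for every weight $W\succ0$ over this affine set is a standard generalized-least-squares problem whose minimizer is $M_k=(G_{k-1}^\top S_k^\top R_k^{-1}S_kG_{k-1})^{\dagger}G_{k-1}^\top S_k^\top R_k^{-1}$, i.e. \eqref{eq:M}; hence $\hat d_{k-1}$ is BLUE and \eqref{eq:dtilde} gives its covariance. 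The prediction $\hat\eta_{k|k-1}=H_{k-1}\hat\eta_{k-1}+G_{k-1}\hat d_{k-1}$ is then an affine image of two BLUEs together with the known dynamics, hence is the BLUE predictor of $\eta_k$, unbiased with covariance $P_{k|k-1}^\eta$ as in \eqref{eq:pred}.

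For the measurement update of $\eta_k$, the step I expect to be the main obstacle is that $\hat\eta_{k|k-1}$ has already consumed $z_k$ through $\hat d_{k-1}$, so the prior error $\tilde\eta_{k|k-1}$ is \emph{correlated} with the measurement noise $\epsilon_k+E_k\xi_k$; reading off the error dynamics in \eqref{eq:pred} gives the cross-covariance $\mathbb{E}[\tilde\eta_{k|k-1}(\epsilon_k+E_k\xi_k)^\top]=-G_{k-1}M_k(P_k^\epsilon+E_kP_k^\xi E_k^\top)$, which after propagation is the source of the $M_k(P_k^\epsilon+E_kP_k^\xi E_k^\top)$ term in \eqref{eq:K}. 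Writing $\hat\eta_k=\hat\eta_{k|k-1}+K_k(z_k-\mu_k-S_k\hat\eta_{k|k-1})$, unbiasedness holds for every $K_k$ by \eqref{eq:etatilde}, and minimizing $P_k^\eta$ in the positive semidefinite order over $K_k$ is the generalized Gauss--Markov fusion of a prior and a measurement whose errors are jointly (non-diagonally) distributed. Completing the square with that joint covariance yields precisely the gain $K_k$ of \eqref{eq:K} with innovation covariance $\tilde R_k$, so $\hat\eta_k$ is BLUE with covariance \eqref{eq:estim}.

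Finally, for $\hat q_{s_*,k}$: when $s_*\notin O_k$ it is the affine image $\mu_{s_*,k}+S_{s_*,k}\hat\eta_k$ of the BLUE $\hat\eta_k$, hence BLUE with the covariance in the last line of \eqref{eq:yhat}. When $s_*\in O_k$, the same scheme applies to $\hat\xi_{s_*,k}=L_{s_*,k}(z_k^{s_*}-\bold{1}\mu_{s_*,k}-\bold{1}S_{s_*,k}\hat\eta_{k|k-1})$: unbiasedness forces $L_{s_*,k}\bold{1}=\II$, and since $\tilde\eta_{k|k-1}$ (propagating into $\hat\eta_k$) is again correlated with $\epsilon_k^{s_*}$ via $G_{k-1}M_k$, minimizing the variance of $\tilde q_{s_*,k}=S_{s_*,k}\tilde\eta_k+\tilde\xi_{s_*,k}$ over $L_{s_*,k}$ subject to $L_{s_*,k}\bold{1}=\II$ is the weighted least-squares problem solved by \eqref{eq:G} with weight $\bar R_{s_*,k}$, giving \eqref{eq:yhat}. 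The overall argument parallels the BLUE derivations for simultaneous input/state estimation in~\cite{gillijns2007unbiased,wan2019attack}; the only nonroutine part is tracking the cross-covariances introduced by the $d_k$-estimation step, which is exactly what produces the extra $M_k(P_k^\epsilon+E_kP_k^\xi E_k^\top)$ contributions appearing in $K_k$, $\tilde R_k$, $L_{s_*,k}$, and $\bar R_{s_*,k}$.
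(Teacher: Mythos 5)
Your proposal is correct and follows essentially the same route as the paper's proof: Gauss--Markov (generalized least squares under the unbiasedness constraint $M_kS_kG_{k-1}=\II$) for $\hat d_{k-1}$, unconstrained minimization of $\trace(P_k^\eta)$ over $K_k$ for $\hat\eta_k$, the constrained minimization with $L_{s_*,k}\bold{1}=\II$ for $\hat\xi_{s_*,k}$, all tied together by induction from the unbiasedness of $\hat\eta_0$. Your explicit identification of the cross-covariance $\mathbb{E}[\tilde\eta_{k|k-1}(\epsilon_k+E_k\xi_k)^\top]=-G_{k-1}M_k(P_k^\epsilon+E_kP_k^\xi E_k^\top)$ as the source of the extra terms in \eqref{eq:K} and $\tilde R_k$ is a helpful elaboration of what the paper leaves implicit in the covariance bookkeeping of \eqref{eq:estim}, but it is not a different argument.
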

\begin{proof}
Assume $\hat{\eta}_{k-1}$ is unbiased. The prediction error is given by
\begin{align*}
    &z_k -\mu_{k} - S_{k}H_{k-1}\hat{\eta}_{k-1}\nnum\\
    &=S_kd_{k-1}+(\epsilon_k+ E_k\xi_k+S_k(\zeta_{k-1}+H_{k-1}\tilde{\eta}_{k-1})).
\end{align*}
By normalizing the above equation with $R_k^{-\frac{1}{2}}$, we have
\begin{align*}
    &R_k^{-\frac{1}{2}}(z_k -\mu_{k} - S_{k}H_{k-1}\hat{\eta}_{k-1})\nnum\\
    &{=}R_k^{-\frac{1}{2}}S_kd_{k-1}{+}R_k^{-\frac{1}{2}}(\epsilon_k+ E_k\xi_k+S_k(\zeta_{k-1}+H_{k-1}\tilde{\eta}_{k-1})),
\end{align*}
where the variance of the last term is normalized, i.e., $Var(R_k^{-\frac{1}{2}}(\epsilon_k+ E_k\xi_k+S_k(\zeta_{k-1}+H_{k-1}\tilde{\eta}_{k-1})))=\II$.
Now, by the Gauss Markov theorem~\cite{kailath2000linear}, we can get $M_k$ in~\eqref{eq:M}. Therefore, $\hat{d}_{k-1}$ in~\eqref{eq:dhat} is BLUE as long as $\hat{\eta}_{k-1}$ is unbiased.

Given that $\hat{\eta}_{k-1}$ and $\hat{d}_{k-1}$ are unbiased, the estimate $\hat{\eta}_k$ is unbiased
\begin{align*}
    {\mathbb E}[\tilde{\eta}_k] &= {\mathbb E}[(\II-K_kS_k)\tilde{\eta}_{k|k-1}-K_k(\epsilon_k+E_k\xi_k)]=0
\end{align*}
for any $K_k$.
Now consider the following optimization problem that minimizes the trace of the covariance $P_k^\eta$ in~\eqref{eq:estim}:
$    \min_{K_k} tr(P_{k}^\eta).$
The problem is an unconstrained convex optimization problem, and thus $K_k$ is found by taking the objective function derivative with respect to the decision variable $K_k$ and setting it equal to zero.
The solution is $K_k$ in~\eqref{eq:K}. Therefore, $\hat{\eta}_k$ is BLUE, provided that $\hat{\eta}_{k-1}$ and $\hat{d}_{k-1}$ are unbiased.

Given $\hat{\eta}_0$ is unbiased, $\hat{d}_0$ and $\hat{\eta}_1$ are BLUE by the above statements. Also, given $\hat{\eta}_{k-1}$ is unbiased (because it is BLUE), $\hat{d}_{k-1}$ and $\hat{\eta}_{k}$ are BLUE. Therefore, $\hat{\eta}_k$ and $\hat{d}_k$ are BLUE for all $k$.

Given that $\hat{\eta}_k$ is BLUE, one can show that $\hat{\xi}_k$ is BLUE by the same logic of the first paragraph in this proof. In sequel, $\hat{q}_k$ is BLUE as well. We omit its details.
\end{proof}

If we have multiple outputs in the same area, we can combine those measurements into a single output by the optimal combination considering their covariance.
So, now we assume that each region may have at most a single measurement.
Consider the following assumption.
\begin{assumption}\label{asm:Pbound}
    There exist $\bar{a}$, $\bar{h}$, $\bar{g}$, $\bar{m}$, $\underline{\xi}$ $>0$, such that the following holds for all $k \geq 0$:
    \begin{align*}
        &\|S_{s,k}\| \leq \bar{s},
        &&\|H_{k}\| \leq \bar{h},
        &&\|G_{k}\| \leq \bar{g},\\
        & \|M_{k}\| \leq \bar{m},
        &&P_k^{\eta}\geq \underline{\eta} \II.
\end{align*}
\end{assumption}

Assumption~\ref{asm:Pbound} is  widely used to show the stability of Kalman filter~\cite{reif1999stochastic,bonnabel2014contraction}.
In input-state estimation, $P_k^\eta$ is bounded, if the transformed system is uniformly observable~\cite{wan2019attack}. 

Since the measurement $z_{s,k}$ is provided by anonymous vehicles, the data center does not know when they can get a measurement $z_{s,k}$. We assume the obtainment of measurement is a random arrival process.
Poisson process is a commonly used model for random and independent message arrivals. Assumption~\ref{asm:poisson} implies that the measurement $z_{s,k}$ at area $s \in D$ is randomly  and independently obtained.
\begin{assumption}
For any $s \in D$, the measurement $z_{s,k}$ is obtained as a Poisson arrival process with arrival rate $\lambda$. The value $z_{s,k}$ is independent of the Poisson distribution.
\label{asm:poisson}
\end{assumption}

Under the assumptions mentioned above, we can show the performance of the state estimation error.

\begin{theorem}
Under Assumptions~\ref{asm:Pbound} and~\ref{asm:poisson}, the expected error $\mathbb{E}[\|\tilde{q}_k\|]$ is practically exponentially stable in probability, i.e., there exist a set of positive constants $a$, $\gamma$, and $c$ such that
\begin{align}
\mathbb{E}[\|\tilde{q}_k\|] \leq a e^{-\gamma k}+c.
\label{eq:exps}
\end{align}
\label{the:FRRFstability}
\end{theorem}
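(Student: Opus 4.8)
The plan is to first reduce the claim about $\mathbb{E}[\|\tilde{q}_k\|]$ to a practical exponential stability estimate for $\mathbb{E}[\|\tilde{\eta}_k\|]$, and then to establish the latter by a stochastic Lyapunov argument in the spirit of~\cite{reif1999stochastic}, averaged over the Poisson measurement pattern. By~\eqref{eq:ytilde} and~\eqref{eq:xitilde}, $\tilde{q}_{s_*,k}$ is an affine function of $\tilde{\eta}_k$, $\tilde{\eta}_{k|k-1}$ and the zero-mean noises $\epsilon_k,\xi_k,\zeta_{k-1}$, whose matrix coefficients ($S_{s_*,k}K_kS_k$, $S_{s_*,k}K_k$, $L_{s_*,k}$, and so on) are bounded uniformly in $k$ by Assumption~\ref{asm:Pbound} together with the bounds on $\|K_k\|$ and $\|L_{s_*,k}\|$ implied by~\eqref{eq:M},~\eqref{eq:K},~\eqref{eq:G}; chaining in the $\tilde{\eta}_{k|k-1}$ recursion that precedes~\eqref{eq:pred} and using the uniformly bounded noise covariances, one obtains $\mathbb{E}[\|\tilde{q}_k\|]\le c_1\,\mathbb{E}[\|\tilde{\eta}_{k-1}\|]+c_2$, so it suffices to prove $\mathbb{E}[\|\tilde{\eta}_k\|]\le a_0 e^{-\gamma_0 k}+c_0$.

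For the reduced problem I would compose~\eqref{eq:etatilde} with the $\tilde{\eta}_{k|k-1}$ recursion to get the one-step form $\tilde{\eta}_k=\Phi_k\tilde{\eta}_{k-1}+\omega_k$, where $\Phi_k=(\II-K_kS_k)(\II-G_{k-1}M_kS_k)H_{k-1}$ and $\omega_k$ collects the $\zeta_{k-1},\epsilon_k,\xi_k$ contributions and has uniformly bounded covariance, and take the Lyapunov candidate $V_k=\tilde{\eta}_k^\top(P_k^\eta)^{-1}\tilde{\eta}_k$. Two regimes then appear. On a step with $O_k=\emptyset$ (no vehicle reports) we have $M_k=0$, $K_k=0$, hence $\Phi_k=H_{k-1}$ and $P_k^\eta=H_{k-1}P_{k-1}^\eta H_{k-1}^\top+P_{k-1}^\zeta$; since the rows of $H_k$ are chosen stable (so $\|H_k\|$ is non-amplifying), taking the conditional expectation over $\zeta_{k-1}$ yields $\mathbb{E}[V_k\mid\mathcal{F}_{k-1}]\le V_{k-1}+\kappa$, i.e.\ at most a constant accumulation per unreported step. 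On a window over which reports arrive from enough areas to make the lifted output map uniformly observable, the BLUE property (Lemma~\ref{lemma:BLUE}) and Assumption~\ref{asm:Pbound} keep $P_k^\eta$ in a bounded range, and the classical Kalman-filter stability estimate gives a contraction $\mathbb{E}[V_{k+\ell}\mid\mathcal{F}_k]\le(1-\theta)V_k+\kappa'$ with $\theta\in(0,1)$ and $\ell$ the window length.

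The decisive step is to glue the two regimes using Assumption~\ref{asm:poisson}. I would partition the time axis into random windows delimited by epochs $\tau_0<\tau_1<\cdots$, where $\tau_{j+1}$ is the first time after $\tau_j$ at which every area has been reported at least once since $\tau_j$; these epochs are a.s.\ finite, and because the per-area inter-arrival times of a Poisson process have exponential tails, the window lengths $T_j=\tau_{j+1}-\tau_j$ have finite mean $\bar{T}$ and exponential tails. Combining the per-step bound inside a window with the end-of-window contraction gives $\mathbb{E}[V_{\tau_{j+1}}\mid\mathcal{F}_{\tau_j}]\le(1-\theta)V_{\tau_j}+\kappa''\bar{T}$, and iterating yields $\mathbb{E}[V_{\tau_j}]\le(1-\theta)^j\mathbb{E}[V_{\tau_0}]+\kappa''\bar{T}/\theta$. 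Passing from the window index to real time through the renewal count $J_k=\max\{j:\tau_j\le k\}$, and using a Chernoff bound on $J_k$ (valid since the $T_j$ have exponential tails) to turn $\mathbb{E}[(1-\theta)^{J_k}]$ into a genuine $e^{-\gamma'k}$ decay, together with $\mathbb{E}[V_k]\le\mathbb{E}[V_{\tau_{J_k}}]+\kappa\,\mathbb{E}[k-\tau_{J_k}]$, gives $\mathbb{E}[V_k]\le a'e^{-\gamma'k}+c'$. Finally $\|\tilde{\eta}_k\|^2\le\lambda_{\max}(P_k^\eta)\,V_k$ with $P_k^\eta$ uniformly bounded above, so by Jensen's inequality $\mathbb{E}[\|\tilde{\eta}_k\|]\le(\lambda_{\max}(P_k^\eta)\,\mathbb{E}[V_k])^{1/2}\le a_0 e^{-\gamma_0 k}+c_0$, which with the first-paragraph reduction proves~\eqref{eq:exps}. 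The main obstacle is precisely this gluing: controlling the error accumulated over a random, occasionally long, unreported stretch and showing that the ensuing observability-driven contraction still wins in expectation once the Poisson statistics are averaged out. A secondary technical point is securing the uniform two-sided bound on $P_k^\eta$ needed for the Lyapunov function — the lower bound is Assumption~\ref{asm:Pbound}, while the upper bound follows, via the BLUE comparison of Lemma~\ref{lemma:BLUE} against a fixed-gain observer that is observable on every window.
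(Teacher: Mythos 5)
Your proposal is correct in outline and reaches the right estimate, but it takes a genuinely different route from the paper at the decisive step. The paper does not split time into measured and unmeasured regimes at all: after lifting the output to the full-area vector with the indicator matrix $\mathcal{I}_k$ and zero-padded gains $\mathbb{M}_k,\mathbb{K}_k$, it asserts a \emph{single uniform per-step contraction} $\tilde{\eta}_{k-1}^\top \bar{H}_{k-1}^\top(\II-\mathbb{K}_kS_k)^\top(P_k^\eta)^{-1}(\II-\mathbb{K}_kS_k)\bar{H}_{k-1}\tilde{\eta}_{k-1}<\delta\,\tilde{\eta}_{k-1}^\top(P_{k-1}^\eta)^{-1}\tilde{\eta}_{k-1}$ with $\delta\in(0,1)$ valid at \emph{every} $k$ (measurement or not), imported from Claim 1.1 of~\cite{wan2019attackArx} under Assumption~\ref{asm:Pbound}; the process-noise inflation of $P_k^\eta$ is what makes the weighted map strictly contractive even on unreported steps, where you only claim non-expansion plus a constant. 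The Poisson assumption then enters the paper's proof only through $\mathbb{E}[\mathcal{I}_k\mathcal{I}_k^\top]=(1-e^{-\lambda\epsilon})\II$, which bounds the expected noise injection and produces the additive constant; a plain geometric iteration of $\mathbb{E}[V_k]\le\delta\mathbb{E}[V_{k-1}]+(1-e^{-\lambda\epsilon})c_0$ finishes, with no stopping times, renewal counts, or Chernoff bounds. Your two-regime gluing is therefore unnecessary for the paper's argument, but it is a legitimate and more self-contained alternative: it does not lean on the external contraction claim, and it makes explicit where the decay actually comes from (observability accumulated over random windows). The price is an extra hypothesis you smuggle in — that the reported areas over each window render the lifted pair $(H_k,S_k)$ uniformly observable — which does not follow from Assumptions~\ref{asm:Pbound} and~\ref{asm:poisson} as stated; the paper sidesteps this by assuming the covariance lower bound and gain bounds directly. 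Your closing remark about needing an upper bound on $P_k^\eta$ is well taken: the paper's final inequality also uses an unstated $\bar{p}$, so both arguments share that implicit requirement.
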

\begin{proof}
For this analysis, we reformulate the output model as follows:
\begin{align}
    z_k &= [z_{s_1},\cdots,z_{s_n}]^\top,
    \label{eq:newz}
\end{align}
which is the collection of outputs for all the areas, where $z_{s_i}=0$ if $s_i \notin O_k$.
Let us introduce an indicator matrix $\mathcal{I}_k = \diag( i_{s_1,k},\cdots,i_{s_n,k})$, where $i_{s_j,k}=1$ if $s_j \in O_k$, $i_{s_j,k}=0$ otherwise.

Consider matrices $\mathbb{M}_k$ and $\mathbb{K}_k$ with appending zeros to $M_k$ and $K_k$ such that~\eqref{eq:pred} and~\eqref{eq:estim} can be replaced with
\begin{align*}
    \hat{\eta}_{k|k-1} &= H_{k-1}\hat{\eta}_{k-1}+\mathbb{M}_k\mathcal{I}_k(z_k -\mu_{k} - S_{k}H_{k-1}\hat{\eta}_{k-1})\nnum\\
    \hat{\eta}_k &= \hat{\eta}_{k|k-1} + \mathbb{K}_k\mathcal{I}_k (z_k - \mu_{k}-S_{k}\hat{\eta}_{k|k-1}),
\end{align*}
where $\mathbb{K}_k=\mathbb{K}_k \mathcal{I}_k$ and $\mathbb{M}_k=\mathbb{M}_k \mathcal{I}_k$ hold because we've appended zeros to $K_k$ and $M_k$. Note that $z_k$ in the above equation represents all the measurements in~\eqref{eq:newz}.
Given this notation, we have the error dynamics:
\begin{align*}
\tilde{\eta}_k &= (\II-\mathbb{K}_kS_k)\tilde{\eta}_{k|k-1}-\mathbb{K}_k(\epsilon_k+E_k\xi_k)\nnum\\
&=(\II-\mathbb{K}_kS_k)(\II-G_{k-1} \mathbb{M}_kS_k)H_{k-1}\tilde{\eta}_{k-1}\nnum\\
&+(\II-\mathbb{K}_kS_k)(\II-G_{k-1} \mathbb{M}_kS_k)\zeta_{k-1}\nnum\\
&-((\II-\mathbb{K}_kS_k)G_{k-1} \mathbb{M}_k+\mathbb{K}_k)(\epsilon_k+E_k\xi_k).
\end{align*}
Choose the Lyapunov function candidate
\begin{align}
    &V_k = \tilde{\eta}_k^\top(P_k^{\eta})^{-1}\tilde{\eta}_k\nnum\\
&=\tilde{\eta}_{k-1}^\top \bar{H}_{k-1}^\top (\II-\mathbb{K}_kS_k)^\top (P_k^{\eta})^{-1}
(\II-\mathbb{K}_kS_k) \bar{H}_{k-1} \tilde{\eta}_{k-1}\nnum\\
&+2\tilde{\eta}_{k-1}^\top \bar{H}_{k-1}^\top (\II-\mathbb{K}_kS_k)^\top (P_k^{\eta})^{-1} ((\II-\mathbb{K}_kS_k) \bar{\zeta}_{k-1}\nnum\\
&+\bar{K}_k \bar{\epsilon}_k)+\bar{\zeta}_{k-1}^\top (\II-\mathbb{K}_kS_k)^\top (P_k^{\eta})^{-1} (\II-\mathbb{K}_kS_k) \bar{\zeta}_{k-1}\nnum\\
&+2\bar{\zeta}_{k-1}^\top (\II-\mathbb{K}_kS_k)^\top (P_k^{\eta})^{-1}\bar{K}_k \bar{\epsilon}_k+ \bar{\epsilon}_k^\top \bar{K}_k^\top (P_k^{\eta})^{-1} \bar{K}_k \bar{\epsilon}_k,
\label{eq:V}
\end{align}
where
\begin{align*}
    \bar{H}_k &= (\II-G_{k-1} \mathbb{M}_kS_k)H_{k-1}\nnum\\
    \bar{K}_k &=-(\II-\mathbb{K}_kS_k)G_{k-1} \mathbb{M}_k-\mathbb{K}_k\nnum\\
    \bar{\zeta}_{k-1} &=(\II-G_{k-1} \mathbb{M}_kS_k)\zeta_{k-1}.
\end{align*}
Under Assumption~\ref{asm:Pbound}, there exists $\delta \in (0,1)$ such that
\begin{align}
&\tilde{\eta}_{k-1}^\top \bar{H}_{k-1}^\top (\II-\mathbb{K}_kS_k)^\top (P_k^{\eta})^{-1}
(\II-\mathbb{K}_kS_k) \bar{H}_{k-1} \tilde{\eta}_{k-1}\nnum\\
&<\delta \tilde{\eta}_{k-1}^\top (P_{k-1}^\eta)^{-1} \tilde{\eta}_{k-1}
\label{eq:Vdi1}
\end{align}
by Claim 1.1 in~\cite{wan2019attackArx}.

Since the interarrival interval of measurements follows an exponential distribution with $\lambda$ by Assumption~\ref{asm:poisson}, we have
\begin{align*}
    \mathbb{E}[i^\alpha]&=1^\alpha\int_0^\epsilon \lambda e^{-\lambda x}dx +0^\alpha\int_\epsilon^\infty \lambda e^{-\lambda x}dx= 1-e^{-\lambda \epsilon}
\end{align*}
for some non-negative integer $\alpha \geq0$, where $\epsilon$ is a sampling interval.

The diagonal indicator matrix $\mathcal{I}_k$ satisfies
$\mathbb{E}[\mathcal{I}]=\mathbb{E}[\mathcal{I}\mathcal{I}^\top]=(1-e^{-\lambda \epsilon}) \II$ and thus
\begin{align}
    \mathbb{E}[\mathcal{I}Q \mathcal{I}^\top]
    =\mathbb{E}[\mathcal{I}\mathcal{I}^\top]\mathbb{E}[Q]
    =(1-e^{-\lambda \epsilon})\mathbb{E}[Q]
    \label{eq:expec}
\end{align}
for any independent square matrix $Q$.
Under Assumption~\ref{asm:Pbound}, there exists positive constant $c_0$ such that
\begin{align}
    &\mathbb{E}[\bar{\zeta}_{k-1}^\top (\II-\mathbb{K}_kS_k)^\top (P_k^{\eta})^{-1} (\II-\mathbb{K}_kS_k) \bar{\zeta}_{k-1}\nnum\\
    &+\bar{\epsilon}_k^\top \bar{K}_k^\top (P_k^{\eta})^{-1} \bar{K}_k \bar{\epsilon}_k]
    \leq (1-e^{-\lambda \epsilon})c_0
\label{eq:Vdi2}
\end{align}
by~\eqref{eq:expec} and Claim 1.2 in~\cite{wan2019attackArx}.
From~\eqref{eq:Vdi1} and~\eqref{eq:Vdi2}, the Lyapunov function~\eqref{eq:V} becomes
\begin{align*}
    &\mathbb{E}[V_k] \leq \delta \mathbb{E}[V_k]+(1-e^{-\lambda \epsilon})c_0 \nnum\\
    &\leq \delta^k \mathbb{E}[V_0]+\sum_{i=0}^{k-1}\delta^i (1-e^{-\lambda \epsilon})c_0 \leq \delta^k \mathbb{E}[V_0]+\frac{(1-e^{-\lambda \epsilon})c_0}{1-\delta}.
\end{align*}
Therefore, we have
\begin{align*}
    \mathbb{E}[\|\tilde{\eta}_k\|^2] \leq \frac{\bar{p}}{\underline{p}}\delta^k \mathbb{E}[\|\tilde{\eta}_0\|^2] + \frac{(1-e^{-\lambda \epsilon})c_0 \bar{p}}{1-\delta}.
\end{align*}
It follows that there exist $a_1,\gamma_1,c_1>0$ such that
\begin{align*}
    \mathbb{E}[\|\tilde{\eta}_k\|] \leq a_1 e^{-\gamma_1 k}\mathbb{E}[\|\tilde{\eta}_0\|]+c_1
\end{align*}
Since $\mathbb{E}[\|\epsilon_{s_*,k}\|]<c_2$, and  $\mathbb{E}[\|\zeta_k\|]<c_3$, by~\eqref{eq:xitilde} and~\eqref{eq:ytilde}, we have
\begin{align*}
    \mathbb{E}[\|\tilde{q}_{s_*,k}\|]&\leq
    \mathbb{E}[\|S_{s_*,k}\tilde{\eta}_k\|]+\mathbb{E}[\|\tilde{\xi}_{s_*,k}\|]\nnum\\
    &\leq
    a e^{-\gamma k}\mathbb{E}[\|\tilde{\eta}_0\|]+c
\end{align*}
for some positive constants $a$, $\gamma$, and $c$.
\end{proof}

Constant $c$ in~\eqref{eq:exps} can be seen as the expected error bound of the prior estimation, where the first term decays exponentially.

\subsection{Proofs of Lemmas~\ref{lem1} and~\ref{lem:Am}}\label{app:proof}

\subsubsection{Proof of Lemma~\ref{lem1}}
The vector $b$ in~\eqref{eq:dynamic} can be reformulated by
$b = b_m \frac{C_f}{\hat{C}_f} = b_m w$.
Therefore, for any $C_f \in [\underline{C}_f,\bar{C}_f]$, we have
\begin{align}
w \in \Omega = [\frac{\underline{C_f}}{\hat{C}_f},\frac{\bar{C_f}}{\hat{C}_f}].
\label{eq:wint}
\end{align}
The uncertainty $\theta$ in~\eqref{app:model} can be found by
\begin{align}
    \theta &= \frac{1}{2  Vw}[0,-\frac{(m+I_z)\Delta C_f}{m  \hat{C}_f}+\frac{(I_z\frac{\ell_r}{\ell_f}-m)\Delta C_r}{m  \hat{C}_f},0,\nnum\\
    &-\frac{(m+I_z)\Delta C_f}{m  \hat{C}_f}+\frac{(I_z\frac{\ell_r}{\ell_f}-m)\Delta C_r}{m  \hat{C}_f}]^\top\nnum\\
    &+ k_m (w^{-1}-1),
    \label{eq:theta_exp}
\end{align}
where $\Delta C_f \triangleq C_f-\hat{C}_f$ and $\Delta C_r \triangleq C_r-\hat{C}_r$.
Given the bounds of $C_f$ and $C_r$ in~\eqref{app:C} and that of $w$ in~\eqref{eq:wint}, 
the bounds of each element $\theta_i$ in~\eqref{eq:theta_exp}
are found by~\eqref{app:theta}, where $I_z\frac{\ell_r}{\ell_f}-m \geq 0$ in Assumption~\ref{asm1} has been applied.

Likewise, $\sigma(t)$ in~\eqref{app:model} is expressed as
\begin{align*}
    \sigma 
    &=-\frac{1}{2\hat{C}_f R}(2C_f\ell_f+C_r\ell_r(\frac{\ell_r}{\ell_f}-1)+\frac{mV^2}{2}),
\end{align*}
and its time-derivative becomes
\begin{align*}
    \dot{\sigma} &= b_m^\dagger G \frac{\partial \dot{p}^{\psi,des}}{\partial R}\dot{R}\nnum\\
    &=\frac{\dot{R}}{2\hat{C}_f R^2}(2C_f\ell_f+C_r\ell_r(\frac{\ell_r}{\ell_f}-1)+\frac{mV^2}{2}).
\end{align*}
Since $\frac{\ell_r}{\ell_f}-1 \geq 0$ by Assumption~\ref{asm1}, we have
$(2C_f\ell_f+C_r\ell_r(\frac{\ell_r}{\ell_f}-1)+\frac{mV^2}{2})>0$.
The bounds $R\geq \underline{R}$ and $|\frac{\dot{R}}{R^2}| \leq \bar{R}_d$  hold by Assumption~\ref{asm2} and lead to~\eqref{eq:bounds}.\oprocend\\

\subsubsection{Proof of Lemma~\ref{lem:Am}}
If $A_m(V) P + P A_m^\top(V)<0$ for all $V_{\min} \leq V \leq V_{\max}$, it is obvious that the same inequality holds for $V=V_{\min}$ and $V = V_{\max}$.

We prove sufficiency:
\begin{align*}
    &A_m(V) P + P A_m^\top(V) \nnum\\
    &= \alpha(V) (A_m(V_{\min}) P + P A_m^\top(V_{\min}))
    \nnum\\
    &\quad +(1-\alpha(V))(A_m(V_{\max}) P + PA_m^\top(V_{\max})) \nnum\\
    &= -\alpha(V) Q_{\min} - (1-\alpha(V))Q_{\max}.
\end{align*}
Since the right hand side is negative definite, the statement holds with $Q(V)=\alpha(V) Q_{\min} + (1-\alpha(V))Q_{\max}$.\oprocend\\

\end{document}